\newtheorem{proposition}{Proposition}
\newtheorem{lemma}{Lemma}
\newcommand{\diag}{\text{diag}}
\newcommand{\Real}{ \mathbb{R}}
\newcommand{\integerZ}{ \mathbb{Z}}
\newcommand{\bthet}{ \mbox{\boldmath $\theta$}}
\newcommand{\Poi}{\text{Poi}}
\newcommand{\muhat} {\hat{\mu}}
\newcommand{\bmuhat} {\hat{\bmu}}
\newcommand{\bthethat}{\hat{\bthet}}
\newcommand{\bphihat}{\hat{\bphi}}
\newcommand{\Vinv}{V^{-1}}
\newcommand{\pari}{\hspace{\parindent}}
\newcommand{\bc}{\begin{center}}
\newcommand{\ec}{\end{center}}
\newcommand{\beq}{\begin{equation}}
\newcommand{\eeq}{\end{equation}}
\newcommand{\bea}{\begin{eqnarray}}
\newcommand{\eea}{\end{eqnarray}}
\newcommand{\beas}{\begin{eqnarray*}}
\newcommand{\eeas}{\end{eqnarray*}}
\newcommand{\bi}{\begin{itemize}}
\newcommand{\ei}{\end{itemize}}
\newcommand{\bd}{\begin{description}}
\newcommand{\ed}{\end{description}}
\newcommand{\be}{\begin{enumerate}}
\newcommand{\ee}{\end{enumerate}}
\newcommand{\bt}{\begin{tabbing}}
\newcommand{\et}{\end{tabbing}}
\newcommand{\btheta}{ \mbox{\boldmath $\theta$}}
\newcommand{\bphi}{ \mbox{\boldmath $\phi$}}
\newcommand{\bmu}{ \mbox{\boldmath $\mu$}}
\newcommand{\bX}{ {\bf X} }
\newcommand{\bY}{ {\bf Y} }
\newcommand{\Thetahat}{\hat{\Theta}}
\newcommand{\half}{\frac{1}{2}}
\DeclareMathOperator{\Diag}{Diag}
\begin{document}

\title{On automating Markov chain Monte Carlo for a class of spatial models}

\author{Murali Haran\\
  School of Statistics \\ The Pennsylvania State University \\ {\tt
    mharan@stat.psu.edu} \and Luke Tierney
  \\Department of Statistics and Actuarial Science \\ University of
  Iowa \\ {\tt luke@stat.uiowa.edu}} \date{} 
\maketitle
 \begin{abstract} 
   Markov chain Monte Carlo (MCMC) algorithms provide a very general
   recipe for estimating properties of complicated distributions.
   While their use has become commonplace and there is a large
   literature on MCMC theory and practice, MCMC users still have to
   contend with several challenges with each implementation of the
   algorithm. These challenges include determining how to construct an
   efficient algorithm, finding reasonable starting values, deciding
   whether the sample-based estimates are accurate, and determining an
   appropriate length (stopping rule) for the Markov chain. We
   describe an approach for resolving these issues in a theoretically
   sound fashion in the context of spatial generalized linear models,
   an important class of models that result in challenging posterior
   distributions.  Our approach combines analytical approximations for
   constructing provably fast mixing MCMC algorithms, and takes
   advantage of recent developments in MCMC theory. We apply our
   methods to real data examples, and find that our MCMC algorithm is
   automated and efficient. Furthermore, since starting values,
   rigorous error estimates and theoretically justified stopping rules
   for the sampling algorithm are all easily obtained for our
   examples, our MCMC-based estimation is practically as easy to
   perform as Monte Carlo estimation based on independent and
   identically distributed draws.
   \\
 \end{abstract} 
\section{Introduction}\label{sec:intro}

Markov chain Monte Carlo (MCMC) methods have become standard tools in
Bayesian inference and other areas in statistics. When inference is
based on some distribution $\pi$, the Metropolis-Hastings algorithm
provides a general recipe for constructing a Markov chain with $\pi$
as its stationary distribution. 
From a careful practitioner's standpoint, however, MCMC-based
inference poses several challenges. In addition to developing and
fine-tuning an MCMC algorithm that produces accurate sample-based
inference quickly, MCMC users also need to determine an appropriate
length for the Markov chain. These issues pose a challenge to
non-experts since, even for a specific class of models, the MCMC
algorithm needs to be tuned carefully to the posterior distribution
resulting from each new data set. Also, the commonly used MCMC
``convergence diagnostics'' used to determine stopping rules for the
algorithm may be unreliable \citep{Cowl:Carl:1996} and may not always
be directly connected to the central goal of MCMC-based inference,
which is to estimate properties of the posterior distribution $\pi$ up
to some desired level of accuracy. Furthermore, users may not always
have a reasonable approach for finding starting values for the
algorithm.

In this paper, we consider approaches for resolving these closely
related issues in a rigorous manner.  We describe how we can construct
provably efficient MCMC algorithms where the MCMC standard errors,
which represent the accuracy of sample-based estimates, can be
estimated consistently.  These MCMC standard errors can, in turn, be
used in a theoretically sound approach to determine an appropriate
length for the Markov chain using new developments in MCMC output
analysis \citep{fleg:hara:jone:2008,Jone:Hara:Caff:Neat:fixe:2006}.
Our approach also automatically provides reasonable starting values
for the MCMC algorithm. Hence, the resulting MCMC-based inference is
automated, theoretically sound, and practical because: (i) initial
values for the chain are obtained automatically, (ii) the algorithm
produces accurate answers quickly and is therefore useful in practice,
(iii) an appropriate Monte Carlo sample size (length for the Markov
chain) is determined automatically, and (iv) the accuracy of the
sample-based estimates can be assessed rigorously.

A key tool we develop for the samplers described here is an accurate
heavy-tailed approximation to the posterior distribution of interest.
In this paper we describe how one can construct such an approximation
and investigate its use in constructing two samplers: a fast mixing
independence Metropolis-Hastings chain \citep{Tier:mark:1994} and the
classic rejection (`accept-reject') sampler.  We study these
algorithms in the context of a popular class of spatial generalized
linear models. Naive MCMC samplers are known to mix poorly for these
models, and little is known about the theoretical properties of the
samplers.  These models are closely related in form to the
geostatistical models described in \cite{Digg:Tawn:Moye:1998}, which
are Bayesian versions of generalized linear models
\citep{McCu:Neld:gene:1999} for spatial data. Hence, our approach, or
variants of it, may be applicable to several other important
statistical models. Our MCMC algorithm is virtually as simple to use
as simple Monte Carlo using independent and identically distributed
draws. An underlying assumption of our work here is that sample-based
inference is of interest, for instance when propogating uncertainties
associated with inferences based on one model or model-component to
other models, or when modelers are interested in reporting estimates
at a level of accuracy that they would like to control. When
sample-based inference is not critical, we note that purely analytical
approximations \cite[cf.][]{Tier:Kada:1986,Rue:Mart:Chop:2009} may, of
course, be completely `automatic' approaches for approximate 
inference since they avoid the MCMC issues outlined above.

In Section \ref{sec:samplers} we provide a general description of the
samplers we consider, along with a discussion of relevant MCMC theory.
In Section \ref{sec:model} we discuss the class of models to which we
apply our methods. In Section \ref{proposal} we describe a general
approach for deriving heavy-tailed approximations in hierarchical
models, discuss how such approximations can be used to construct fast
mixing MCMC algorithms, and provide theoretical details. Section
\ref{application} describes the application of these methods to
several real data sets.  We conclude with a discussion of our results
in Section \ref{discussion}.

\section{Background}\label{sec:samplers}
This section provides a brief overview of the samplers we consider,
along with some relevant theoretical background. In Section
\ref{subsec:MCMC} we provide some basic theory on Markov chain Monte
Carlo, which sets the stage for the fast mixing MCMC sampler. In
Section \ref{subsec:rejection} we briefly review rejection sampling
since we are later able to use our heavy-tailed approximation to
construct effective rejection samplers in some cases.

\subsection{Markov chain Monte Carlo basics}\label{subsec:MCMC}
Our goal is typically to estimate expectations with respect to a
distribution $\pi$. That is, we are interested in $E_{\pi} g(x)
=\int_{\Omega} g(x)\pi(dx)$, where $g$ is a real-valued
$\pi-$integrable function on $\Omega$. Consider a Harris-ergodic
Markov chain $\bX=\{X_1,X_2,\dots\}$ with state space $\Omega$ and
stationary distribution $\pi$ \cite[for definitions
see][]{Meyn:Twee:1993}. If $E_{\pi}| g(x)|<\infty$, we can appeal to
the ergodic theorem $$
\bar{g}_n = \frac{1}{n} \sum_{i=1}^n
g(X_i)\rightarrow E_{\pi} g(x) \mbox{ with probability 1}.$$
Now let
$P^n(x,\cdot)$ be the $n$-step transition kernel for this chain, so
that $P^n(x,A) = P(X_{i+n}\in A \mid X_i = x)$ for $x\in \Omega$ and
any measurable set $A$. Then it also follows that $$
\lim_{n\to
  \infty} \lVert P^n(x,\cdot) - \pi(\cdot) \rVert_{TV} = 0 \:\:\:
\mbox{ for any }x\in \Omega,$$
where the convergence is in total
variation distance, which implies convergence in distribution
\citep{Bill:1999}. However, to estimate standard errors, we need to
appeal to a Central Limit Theorem (CLT), which does not hold in
general. For a CLT to hold, the rate of convergence of the Markov
chain to $\pi$ is critical. A Markov chain is said to be {\it
  geometrically ergodic} if for some constant $t\in (0,1)$ and
$\pi$-almost surely finite function $M: \Omega \to \Real^+$ and $n \in
\mathbb{N}$
\begin{equation*}
  \lVert P^n(x,\cdot) - \pi(\cdot)\rVert_{TV} \leq M(x) \, t^n \mbox{ for any } x \in \Omega.
\end{equation*}
If $M(x)$ is bounded, then $\bX$ is {\it uniformly ergodic}. If $\bX$ is uniformly
ergodic and $E_{\pi} g^{2} < \infty$, we have a CLT
\begin{equation*}
\sqrt{n} (\bar{g}_{n} - E_{\pi} g) \stackrel{d}{\rightarrow}
\text{N} (0, \sigma^{2}_{g})
\end{equation*}
as $n \rightarrow \infty$ with $\sigma^{2}_{g} = \text{var}_{\pi} \{
g(X_{1})\} + 2 \sum_{i=2}^{\infty} \text{cov}_{\pi} \{ g(X_{1}),
g(X_{i})\}$. 
For a review of Markov chain CLTs under other sets of sufficient
conditions see \citet{jone:2004} and \citet{robe:rose:2004}.  For a
general review of MCMC theory refer to \cite{Tier:mark:1994}.  

If the CLT holds as above then \cite{Jone:Hara:Caff:Neat:fixe:2006}
provide a consistent estimate of $\sigma_g^2$, the MCMC standard error
in $\bar{g}_n$, which can then be used to determine how long to run
the Markov chain. In the `fixed width' approach to determining chain
length, the user stops the simulation when the width of a confidence
interval based on an ergodic average is less than a user-specified
value \citep{fleg:hara:jone:2008,Jone:Hara:Caff:Neat:fixe:2006}. Thus,
{\it for a uniformly ergodic chain, sample-based inference is
  comparable to i.i.d.\ Monte Carlo in many ways}: a CLT holds under
similar conditions, a consistent estimate of Monte Carlo standard
errors is easily obtained, and the error estimate can be used to
determine a stopping rule for the sampler.  Note that it is
challenging to construct uniformly ergodic Markov chains for real
problems, and it is generally not easy to prove that a given Markov
chain is uniformly or geometrically ergodic; most such proofs have
relied on establishing drift and minorization conditions
\cite[cf.][]{hobe:geye:1998}.

We briefly describe the consistent batch means approach to calculating
Monte Carlo standard errors. Suppose the Markov chain $\bX$ is run for
a total of $n=ab$ iterations 
(hence $a$ and $b$ are implicit functions of $n$) and define
\begin{equation*}
  \bar{Y}_{j} := \frac{1}{b} \sum_{i=(j-1)b+1}^{jb} g (X_{i})
\hspace*{5mm} \text{ for } j=1,\ldots,a \; .
\end{equation*}
The batch means estimate of $\sigma_{g}^{2}$ is
\begin{equation*}
\hat{\sigma}_{g}^{2} = \frac{b}{a-1} \sum_{j=1}^{a} (\bar{Y}_{j} -
\bar{g}_n)^{2}  \; .
\end{equation*}
\citet{Jone:Hara:Caff:Neat:fixe:2006} showed that if the batch size
and the number of batches are allowed to increase as the overall
length of the simulation increases by setting $b_{n}=\lfloor \sqrt{n}
\rfloor$ and $a_{n} = \lfloor n/ b_{n} \rfloor$ then
$\hat{\sigma}_{g}^{2} \to \sigma_{g}^{2}$ with probability 1 as $n \to
\infty$.  This is called consistent batch means (CBM) to distinguish
it from the standard (fixed number of batches) version.  This is an
attractive approach to estimating standard errors since it is easy to
compute and holds under the regularity conditions that the chain is
uniformly ergodic and $E_{\pi} |g|^{2} < \infty$ (though these are not
the only set of sufficient conditions; see
\citet{Jone:Hara:Caff:Neat:fixe:2006} for details).

\subsection{Rejection sampling}\label{subsec:rejection}

Rejection or `accept-reject sampling' \citep{vonneumann1951vtu} is a
well established, simple but powerful method for generating random
variates from a given distribution $\pi$ with support $\Omega$
\cite[also see][]{robe:case:2005}.
Assume we have a proposal distribution $r$ so that we can draw random
samples from $r$, and we know $B$ such that
\begin{equation}\label{eq:rejcond}
  \mbox{ ess} \sup_{x \in \Omega} \frac{\pi(x)}{r(x)} < B,\mbox{  for some }B<\infty,
\end{equation}
where ``ess sup'', the essential supremum (the supremum over all but a
set of measure zero), is taken with respect to $\pi$.
The accept-reject sampling algorithm is as follows:
\begin{itemize}
\item Draw $X \sim r$ and draw $U \sim $ Uniform(0,1).
\item If $U \leq \frac{\pi(X)}{r(X) B }$ return $X$, else do not return $X$.
\end{itemize}
Values returned by the above algorithm are distributed according to
$\pi$. Note that we only need to know both $\pi$ and $r$ up to a
constant of proportionality, that is, we could replace $\pi(x)$ and
$r(x)$ with unnormalized functions $h(x)$ and $q(x)$ where $h(x)
\propto \pi(x)$ and $q(x) \propto r(x)$. However, in addition to
satisfying (\ref{eq:rejcond}), we also need a specific value of $B$
that satisfies this condition. These are stringent requirements, and
explains why rejection sampling is rarely considered a practical
option for sample-based inference for realistic Bayesian models. 

\section{Generalized linear models for spatial data }\label{sec:model}

\pari Spatial generalized linear models are very convenient models for
spatial data when the sampling mechanism is known to be non-Gaussian.
The spatial dependence can be modeled via Gaussian processes
\citep{Digg:Tawn:Moye:1998} or Gaussian Markov random fields (GMRFs)
\cite[cf.][]{Rue:Held:2005}. 
For brevity, we only describe GMRF-based models for count data as this
is the example used later in this paper.  Other models such as
Gaussian process-based models may be specified in analogous fashion.

Consider the following hierarchical spatial model for areal data (data
arising as sums or averages over geographic regions): the count in
region $i$, denoted by $Y_{i}$ for $i=1,\dots,N$, is modeled as a
Poisson random variable with mean $E_{i}\exp(\mu_i)$. $E_{i}$, assumed
fixed and known, is the count in region $i$ when assuming constant
rates for all regions and is typically the product of the population
of the $i$th region and the overall rate in the entire study region.
$\mu_i$ is the log-relative risk specific to the $i$th region. Hence
the $Y_i$s are modeled as conditionally independent random variables,
\begin{equation}\label{eq:Poislikelihood2}
  Y_{i} \mid \mu_i \sim \Poi (E_{i}e^{\mu_i}),\:\:i=1,\dots,N,
\end{equation}
with $\mu_i$ modeled linearly as $\mu_i = \theta_i + \phi_i$, and
$\btheta=(\theta_1,..,\theta_N)^T$ and $\bphi=(\phi_1,..,\phi_N)^T$
are vectors of random effects. The $\theta_i$'s are independent and
identically distributed normal variables, while the $\phi_i$'s are
assumed to follow a GMRF. In this way, each $\theta_i$ captures the
$i$th region's extra-Poisson variability due to area-wide
heterogeneity, while each $\phi_i$ captures the $i$th region's excess
variability attributable to regional clustering. These distributions
are specified as follows:
\[ 
\theta_i|\tau_h \sim N(0,1/\tau_h) \mbox{, and } \phi_i|\phi_{j \neq
  i} \sim N(\mu_{\phi_i}, \sigma^2_{\phi_i}), i=1,...,N\:,\]
\[ \mbox{where }\mu_{\phi_i} = \frac{\sum_{j \neq i}w_{ij}\phi_j}{\sum_{j \neq i}w_{ij}} \mbox{ and } \sigma^2_{\phi_i}= \frac{1}{\tau_c{\sum_{j \neq i}w_{ij}}}. 
\]
The $\mu_{\phi_i}$ for a region $i$ is thus a weighted average of the
clustering parameters in other regions. Here we use the most common
weighting, where $w_{ij}$ is taken as $1$ if regions $i$ and $j$ are
immediate neighbours, and 0 otherwise.  This improper prior can be
written as
\begin{equation*}
\bphi | \tau_c \propto \tau_c^{(N-1)/2}\exp\left(-\frac{\tau_c}{2}\bphi^TQ\bphi\right)
\end{equation*}
where
$$Q_{ij}=\left\{
\begin{array}{l l}
 \phantom{-}n_i & \mbox{if }i=j\\
 \phantom{-}0 & \mbox{if }i\mbox{ is not adjacent to }j\\
-1 & \mbox{if }i\mbox{ is adjacent to }j\\
\end{array} \right.
$$
The model specification is completed by specifying priors on the
precision (inverse variance) parameters, $\tau_h \sim
$Gamma($\alpha_h,\beta_h$) and $\tau_c \sim
$Gamma($\alpha_c,\beta_c$).  Inference for this model is based on the
$2N +2$ dimensional posterior distribution $\pi(\btheta,\bphi,
\tau_h,\tau_c \mid \bY)$, where $\bY=(Y_1,\dots,Y_n)^T$. The 
posterior distribution is provided in Appendix \ref{app:derivations}.

This model was used by \citet{Besa:York:Moll:1991} for Bayesian image
restoration, and has since become popular in disease mapping
\citep{mollie:1996}.  GMRFs are very widely used in hierarchical Bayes
models \cite[see for example][]{ban:carl:gelf:2004}. The distributions
arising from such models are challenging enough so standard MCMC
samplers for the posterior distributions are known to mix poorly. A
variety of improved MCMC approaches for these models have been studied
in \cite{Knor:Rue:2002} and \citet{Hara:Hodg:Carl:2003} but their
theoretical properties are not known. The issues described in Section
\ref{sec:intro} are hence unresolved: determining an appropriate chain
length is difficult, and MCMC standard errors of estimates are not
estimated rigorously.

\section{Constructing automated samplers}\label{proposal}
We describe an approach for constructing samplers that are virtually
automated for the class of spatial generalized linear models we
consider in this paper. To achieve this, we first derive an
approximation for posterior distributions arising from hierarchical
models with latent Gaussian random fields.  A heavy-tailed version of
this approximation is then used to create a proposal distribution that
allows for the construction of the samplers described in Section
\ref{sec:samplers}.

We derive the approximation by first obtaining an approximation for
the joint distribution in a form that allows for a large number of
parameters to be integrated out analytically.  The method of
integrating out model parameters to sample from a lower dimensional
marginal distribution has been explored by several authors, including
\citet{Wolf:Kass:2000,Game:More:Rue:2003,Ever:Morr:2000,Ever:2001,blinded:2003}.
In each of the hierarchical models considered above, it is possible to
analytically obtain the exact marginal distribution of the variance
components.  Furthermore, any MCMC algorithms used when sampling from
the marginal distributions neither have known theoretical properties,
nor rigorous estimates of error associated with them.  For
hierarchical models in general and the spatial models considered here
in particular, exact formulae are not unavailable for any marginal
distributions.

\subsection{A heavy-tailed approximation for hierarchical models}\label{genmargdist}
We now describe our general approach for deriving an approximation for
a posterior distribution of a variance component model. Denote the
precision parameters by $\Lambda$, the random effects by $\Theta$ and
the data by $\bY$. Suppose the distribution of interest is
$\pi(\Theta, \Lambda|\bY)$ and we can derive an approximation
$\hat{\pi}(\Theta, \Lambda|\bY)$ for which we can analytically obtain
the marginal distribution of the variance components,
$s_1(\Lambda|\bY)$, and the conditional distribution,
$s_2(\Theta|\Lambda,\bY)$. Approaches for deriving such a $\hat{\pi}$
involve using a Gaussian approximation to the likelihood, as described
in Section \ref{subsec:margdistr}; in general, the Laplace
approximations may be useful for this as well
\cite[cf.][]{robe:case:2005}.  Our general approach, first described
in \cite{blinded:2003}, can be summarized as follows:
\begin{itemize}
\item Step 1: Approximate the target posterior, $\pi(\Theta,
  \Lambda|\bY)$, as arising from a generalized linear model, by $\hat{\pi}(\Theta,
  \Lambda|\bY)$.
\item Step 2: Analytically integrate $\hat{\pi}(\Theta, \Lambda|\bY)$
  with respect to $\Theta$ to obtain the approximate marginal
  posterior $s_1(\Lambda|\bY)$. $\hat{\pi}(\Theta, \Lambda|\bY)$ can
  then be written as the product $s_1(\Lambda|\bY)
  s_2(\Theta|\Lambda,\bY)$ where $s_2$ is the easily obtained
  approximate conditional distribution of the random effects.
\item Step 3: Find $r_1$ and $r_2$ such that they are heavier tailed
  distributions with similar shapes to $s_1$ and $s_2$ respectively.
  $r(\Theta,\Lambda|\bY)=r_1(\Lambda|\bY) r_2 (\Theta|\Lambda,\bY)$ is
  then a heavy-tailed approximation to $\pi$. If we can demonstrate
  that $r$ is heavy-tailed with respect to $\pi$, that is, it
  satisfies (\ref{eq:rejcond}), it can be used to construct samplers
  with desirable properties.
\end{itemize}
We believe the approach outlined above can be used to derive useful
approximations for several interesting and important Bayesian models.
We focus our attention here on showing how it can be applied to the
models described in Section \ref{sec:model}.

\subsection{A heavy-tailed approximation for a spatial generalized linear model}\label{subsec:margdistr}
This subsection provides an outline of the derivation of the
approximate marginal distributions ($s_1$) of the precision parameters
for the GMRF model described in Section \ref{sec:model}.  Details are
provided in Appendix \ref{app:derivations}. A Gaussian approximation
for the likelihood (\ref{eq:Poislikelihood2}) is
\begin{equation}\label{eq:Approxlikelihood2}
Y_{i} \sim N(E_{i}e^{\mu_i},E_{i}e^{\mu_i})
\end{equation}
Let $\hat{\mu_i}$ be $\log(Y_i/E_i)$. The delta method gives us the
approximation
\begin{equation}\label{eq:delta2}
  \hat{\mu_i} \sim N(\theta_i+\phi_i, 1/Y_i). 
\end{equation}
Note that if $Y_i$ is 0, we replace it with 0.5 when constructing the
approximation. A similar strategy was described in
\citet{Hara:Hodg:Carl:2003} in the context of block MCMC sampling. If
we denote $\bthet=(\theta_1,\dots,\theta_N)^T$,
$\bphi=(\phi_1,\dots,\phi_N)^T$ and $\Theta=(\bthet^T,\bphi^T)^T$ we
can analytically integrate the approximate joint posterior
distribution $\hat{\pi}(\Theta,\tau_h,\tau_c|\bY)$ with respect to
$\Theta$ to obtain $s_1(\tau_h,\tau_c|\bY)$, an approximation to the
marginal distribution of $(\tau_h,\tau_c)$.

If we let $\Vinv = \Diag(Y_1,\dots,Y_N)$ and $\bmuhat =
(\muhat_1,\dots,\muhat_N)^T$, the approximate distribution of the
random effects parameters, conditional on the precision parameters is
 \begin{equation}\label{eq:SMCMCpost}
   s_2(\Theta | \tau_h,\tau_c,\bY) \sim N\left(C^{-1} \left(-\half D^T\right)\;,\; C^{-1}\right) \;,
 \end{equation}
where 
\begin{equation*}
  C_{2N\times 2N} =
  \begin{bmatrix}
    \Vinv + \tau_h I & +\Vinv\\
    +\Vinv & \Vinv + \tau_c Q
  \end{bmatrix},
\end{equation*}
and
\begin{equation*}
  D_{1 \times 2N}= (-2\bmuhat^TV^{-1}, -2\bmuhat^TV^{-1}).
\end{equation*}
We use heavy-tailed distributions $r_1$ and $r_2$ that have roughly
the same shape and scale as $s_1$ and $s_2$ respectively.  We let
$r_1=r_{1a}r_{1b}$, where $r_{1a}$ and $r_{1b}$ are independent log-t
distributions. Among candidate distributions considered were the
gamma, Weibull and log-normal, but the log-t was found to be the most
appropriate due to its tail behavior and flexibility.  $r_2$ was then
obtained by using a multivariate-t with the mean and covariance of
$s_2(\bthet,\bphi|\tau_h,\tau_c,\bY)$.  We can now state the following
result.
\begin{proposition} \label{prop:envelope} Let
  $r(\tau_h,\tau_c,\Theta)= r_1(\tau_h,\tau_c) r_2(\Theta \mid
  \tau_h,\tau_c)$. Let $\pi$ be the posterior distribution
  corresponding to the Poisson-Markov random field model in Section
  \ref{sec:model}. Then, 
  $$\mbox{ ess } \sup \frac{\pi (\tau_h,\tau_c,\Theta)}{r(\tau_h,\tau_c,\Theta)} < B, $$
for some $B<\infty$, with $\tau_h>0, \tau_c>0, \Theta \in \Real^{2n}.$
\end{proposition}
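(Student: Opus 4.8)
\textit{Proof sketch.}

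My plan is to exploit the factorisation built into the construction. Since $\hat\pi(\Theta,\tau_h,\tau_c\mid\bY)=s_1(\tau_h,\tau_c\mid\bY)\,s_2(\Theta\mid\tau_h,\tau_c,\bY)$ and $r=r_1r_2$, I would write
\[
\frac{\pi}{r}=\frac{\pi}{\hat\pi}\cdot\frac{s_2}{r_2}\cdot\frac{s_1}{r_1},
\]
and bound the product $(\pi/\hat\pi)(s_2/r_2)$, a function of $(\Theta,\tau_h,\tau_c)$, together with $s_1/r_1$, a function of $(\tau_h,\tau_c)$ alone, separately. The first quotient $\pi/\hat\pi$ is the ratio of the exact Poisson likelihood to its delta-method Gaussian approximation, hence a function of the linear predictors $\mu_i=\theta_i+\phi_i$ only; writing $t_i=\mu_i-\hat\mu_i$ and $\tilde Y_i=\max(Y_i,\tfrac12)$, a short calculation gives $\log(\pi/\hat\pi)=\text{const}+\sum_i\tilde Y_i\,h(t_i)$ up to a term linear in $t$, where $h(t)=t-e^{t}+\tfrac12 t^{2}$. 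Because $e^{t}-t\ge 1$ for all $t$, growing linearly as $t\to-\infty$, one has $h(t)\le\tfrac12 t^{2}-\tfrac12(1+|t|)$; thus $\pi/\hat\pi$ is bounded on compacta but diverges, as any $\mu_i\to-\infty$, at the Gaussian rate $\exp(\tfrac12\tilde Y_i t_i^{2})$, since the approximate likelihood has lighter (Gaussian) tails in $\mu_i$ than the true (exponential) ones. Hence it must be controlled jointly with $s_2/r_2$.

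The second quotient supplies exactly the counterbalancing decay. Because $r_2$ is the multivariate-$t$ whose scale matrix is a fixed multiple of $C^{-1}$, the covariance of the Gaussian $s_2$, the normalising determinant $|C|^{1/2}$ cancels:
\[
\frac{s_2}{r_2}=\kappa\,e^{-\frac12 Q_C}\bigl(1+a\,Q_C\bigr)^{(\nu+2N)/2},\qquad Q_C=(\Theta-\hat\Theta)^{\!\top}C(\Theta-\hat\Theta),
\]
with constants $\kappa,a>0$ \emph{not} depending on $(\tau_h,\tau_c)$. Since $\Theta^{\!\top}C\Theta=(\bthet+\bphi)^{\!\top}\Vinv(\bthet+\bphi)+\tau_h\|\bthet\|^{2}+\tau_c\,\bphi^{\!\top}Q\bphi$, one has $Q_C=\sum_i\tilde Y_i(\mu_i-\hat\mu_i^{(C)})^{2}+R$, where $\hat\mu_i^{(C)}=\hat\theta_i+\hat\phi_i$ is the $i$th linear predictor at the mode $\hat\Theta$ and $R=\tau_h\|\bthet-\bthethat\|^{2}+\tau_c(\bphi-\bphihat)^{\!\top}Q(\bphi-\bphihat)\ge 0$. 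In $(\pi/\hat\pi)(s_2/r_2)$ the divergent $\sum_i\tilde Y_i h(t_i)$ meets $-\tfrac12\sum_i\tilde Y_i(t_i-\delta_i)^{2}$, with $\delta_i=\hat\mu_i^{(C)}-\hat\mu_i$; the leading quadratics cancel, leaving $\sum_i\tilde Y_i\psi_{\delta_i}(t_i)$ with $\psi_\delta(t)=-e^{t}+(1+\delta)t-\tfrac12\delta^{2}$, plus the (uniformly bounded) contributions of the polynomial factor and of $-\tfrac12 R$, and the linear remainder, which is absorbed since $\psi_\delta$ is coercive. The estimate then rests on three facts: (i) $\psi_\delta$ is bounded above, with $\psi_\delta(t)\to-\infty$ as $|t|\to\infty$, whenever $\delta>-1$, uniformly for $\delta$ in compact subsets of $(-1,\infty)$; (ii) the normal equations $C\hat\Theta=-\tfrac12 D^{\!\top}$ together with $\bOne^{\!\top}Q=\bZero$ give $\sum_i\tilde Y_i\delta_i=0$, which controls the ``global'' escape directions along which all $\mu_i\to-\infty$ at a common rate; and (iii) an offset $\delta_i$ can fail to satisfy $\delta_i>-1$ only when both $\tau_h$ and $\tau_c$ are bounded away from $0$ — this too comes from the normal equations, since $\tau_h\bthethat=\tau_c Q\bphihat=S^{-1}(\bmuhat-\bphihat)$ for a diagonal $S=\tau_h^{-1}I+V\succ0$ ($V=(\Vinv)^{-1}$) — in which case the retained prior quadratic $R$ has coefficients bounded below and furnishes the quadratic decay that dominates the at-most-linear growth of $\psi_{\delta_i}$ along the offending ``local'' direction.

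For $s_1/r_1$, performing the Gaussian integral over $\Theta$ gives $s_1(\tau_h,\tau_c)\propto|C|^{-1/2}\exp\bigl(\tfrac12 b^{\!\top}C^{-1}b\bigr)\,\tau_h^{N/2+\alpha_h-1}e^{-\beta_h\tau_h}\,\tau_c^{(N-1)/2+\alpha_c-1}e^{-\beta_c\tau_c}$ with $b=-\tfrac12 D^{\!\top}$. At the boundary of $(0,\infty)^{2}$, $|C|$ is polynomial in the precisions — vanishing to order $\tau_h\tau_c^{\,N-1}$ as $(\tau_h,\tau_c)\to 0$, because $Q$ drops one rank — while $b^{\!\top}C^{-1}b$ stays bounded, the near-null subspace of $C$ (spanned by the directions $(u,-u)$) being orthogonal to $b$. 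Hence $s_1$ decays only polynomially at the boundary, whereas each log-$t$ factor of $r_1$ decays like $\tau^{-1}(\log\tau)^{-(\nu+1)}$ there, with an integrable spike at $0$; a rate comparison gives $\sup_{(\tau_h,\tau_c)}s_1/r_1<\infty$, and combining with the first bound yields $\text{ess}\sup\,\pi/r<\infty$.

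The routine content is the one-dimensional estimate for $h$ and the tail comparison for the log-$t$ factors. The hard part is the uniformity over the precision parameters: everything delicate is concentrated in the asymptotics of the linear system $C\hat\Theta=-\tfrac12 D^{\!\top}$ (the mode $\hat\Theta$ and the offsets $\delta_i$), of $\det C$, and of $C^{-1}$, in the four boundary regimes $\tau_h\to 0,\infty$ and $\tau_c\to 0,\infty$ and their combinations, where the singularity of the GMRF precision matrix $Q$ must be tracked carefully. I expect that to be the principal obstacle.
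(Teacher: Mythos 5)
Your route is sound in outline but genuinely different from the paper's. The paper never passes through $\hat{\pi}$, $s_1$, $s_2$: it writes $\pi/r$ directly, splits the multivariate-$t$ quadratic via $C=C^{*}+C^{**}+C^{***}$ (the $\Vinv$ block, the $\tau_h I$ block, the $\tau_c Q$ block) using $1+A+B\le(1+A)(1+B)$, and pairs the $C^{*}$ piece with the \emph{exact} Poisson factors $\exp(w_iY_i-E_ie^{w_i})$ and the other two pieces with the Gaussian prior exponentials $\exp(-\tfrac{\tau_h}{2}\theta^T\theta)$, $\exp(-\tfrac{\tau_c}{2}\phi^TQ\phi)$; the precision-dependent normalizing factors $\tau_h^{N/2}\tau_c^{M/2}\det(C)^{-1/2}$ are handled by a separate determinant lemma. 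The key payoff of the paper's ordering is that the left-tail decay rate of each likelihood factor is $Y_i$, a fixed constant, so the per-coordinate bound needs no condition at all on the $\tau$-dependent center beyond the uniform bounds $\|\tau_h\mu_{N,\theta}\|,\|\tau_c\mu_{N,\phi}\|\lesssim\|\bmuhat\|$ (the paper's Lemma 2). Your cancellation of the Gaussian against the $t$ instead produces the coordinatewise functions $\psi_{\delta_i}$, whose coercivity degenerates as $\delta_i\downarrow-1$, so you are forced into exactly the delicate offsets analysis you flag as the principal obstacle: your facts (i) and (ii) check out (I verified $\sum_i \tilde Y_i\delta_i=0$ from the normal equations and $Q\bOne=\bZero$), but (iii) is only asserted — to show that $\delta_i\le-1+\epsilon$ forces both precisions away from $0$, and that the retained quadratic $R$ (which is only a seminorm, since $Q$ annihilates $\bOne$) then dominates after splitting $\bphi-\bphihat$ into its $\bOne$-component (handled by (ii)) and its complement, you need uniform-in-$(\tau_h,\tau_c)$ bounds on $\bthethat$ and $Q\bphihat$, which is precisely the content of the paper's Lemma 2 plus some bookkeeping. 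What your route buys in exchange is a cleaner treatment of the determinant: because $|C|^{1/2}$ cancels between $s_2$ and $r_2$, the boundary analysis of $\det C$ is confined to $s_1/r_1$, and your orthogonality observation (indeed $b^TC^{-1}b\le\bmuhat^T\Vinv\bmuhat$ uniformly) disposes of the exponential factor there; your boundary rates for $s_1/r_1$ match the paper's Lemma 1 conclusion, including the implicit requirement that $\alpha_h$ not be too small, which you should state. One shared caveat: both your argument and the paper's use $Y_i>0$ in the likelihood-tail step (for a zero count the exponential left tail disappears and, in your parametrization, the coercivity condition worsens from $\delta_i>-1$ to $\delta_i>0$), so the result as proved really pertains to data with all counts positive, with zero counts needing separate treatment.
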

\begin{proof} 
See Appendix~\ref{app:proof}.
\end{proof}

To specify the parameter values for $r_1$ and $r_2$, we find $r_1$ and
$r_2$ that best match $s_1$ and $s_2$. This can be done in a number of
ways, but a simple and effective approach we used was as follows: The
profiles of the log-transformed function $s_1(\tau_h,\tau_c|\bY)$ along
$\tau_h$ and $\tau_c$ were plotted. The heavy-tailed proposals for the
precision parameters were found on the log-scale by matching the mode
and variance of the t-distributions to the approximate log-scale
profiles.  The corresponding log-t distributions $r_{1a}(\tau_h)$ and
$r_{1b}(\tau_c)$ were used jointly as the proposal
$r_1(\tau_h,\tau_c|\bY)$ for ($\tau_h,\tau_c$). It is easy to draw
$(\tau_h^*,\tau_c^*,\Theta^*)$ from $r_1 r_2$:
\begin{enumerate}
\item Draw $\tau_h^* \sim r_{1a}$ and $\tau_c^*\sim r_{1b}$.
\item Conditional on the values drawn above, simulate $\Theta^* \sim r_2(\Theta|\tau_h^*,\tau_c^*,\bY)$.  
\end{enumerate}

\subsection{Sampling algorithms}\label{subsec:samplingalg}
We now provide some details of the samplers constructed based on the
approximations obtained in the previous sections.  The ease of
producing samples from $r$, combined with Proposition 1, allows us to
construct two samplers:
\begin{enumerate}
\item Rejection sampler: $r$ can be used as a proposal distribution in a classical
  rejection sampling algorithm.
\item Independence Metropolis-Hastings: $r$ can be used as a proposal
  in an independence chain algorithm, where every proposed update to
  the Markov chain is obtained from $r$, regardless of the current
  value of the chain \citep{Tier:mark:1994}. In particular, we
  consider an independence chain where the entire state space is
  updated in a single block. The starting value for the Markov chain
  is obtained by drawing a sample from $r$.
\end{enumerate}
An important result corresponding to the above independence chain is obtained.
\begin{proposition}\label{prop:uniformergod} Consider an independence
  Metropolis-Hastings algorithm with target density
  $\pi(\tau_h,\tau_c,\Theta)$ and proposal density
  $r(\tau_h,\tau_c,\Theta)$. The resulting Markov chain is uniformly
  ergodic.
\end{proposition}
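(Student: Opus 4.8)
The plan is to invoke the standard characterization of uniform ergodicity for independence Metropolis--Hastings chains. It is a classical result (see \citealp{Tier:mark:1994}, and also \citealp{MengTweedie1996}) that an independence chain with target $\pi$ and proposal $r$ is uniformly ergodic if and only if there exists a constant $\beta>0$ such that $r(x)/\pi(x) \geq \beta$ for ($\pi$-almost) all $x$ in the support of $\pi$; equivalently, $\operatorname{ess\,sup}_x \pi(x)/r(x) < \infty$. In that case one in fact obtains the explicit bound $\lVert P^n(x,\cdot) - \pi(\cdot)\rVert_{TV} \leq (1-\beta)^n$, so the minorizing constant $M(x)$ in the definition of geometric ergodicity given in Section~\ref{subsec:MCMC} can be taken to be the constant $1$, which is precisely uniform ergodicity.

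First I would state this equivalence as the tool being used, citing the relevant literature so the argument is self-contained. The second step is simply to observe that the hypothesis needed — the existence of $B<\infty$ with $\operatorname{ess\,sup}\, \pi(\tau_h,\tau_c,\Theta)/r(\tau_h,\tau_c,\Theta) < B$ over the parameter space $\tau_h>0,\tau_c>0,\Theta\in\Real^{2n}$ — is exactly the content of Proposition~\ref{prop:envelope}, which has already been established (its proof being deferred to Appendix~\ref{app:proof}). Setting $\beta = 1/B > 0$ then gives the required lower bound $r/\pi \geq \beta$ on the support of $\pi$, and the cited theorem delivers uniform ergodicity, along with the quantitative rate $t^n$ with $t = 1 - 1/B$.

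There is essentially no obstacle in this proof itself: the entire analytic difficulty has been isolated into Proposition~\ref{prop:envelope}, whose verification (a tail-domination argument comparing the Poisson--GMRF posterior against the log-$t$ and multivariate-$t$ proposal) is the genuinely technical part and lives in the appendix. The only things to be careful about here are bookkeeping points: confirming that the support of the proposal $r$ contains the support of $\pi$ (both are $(0,\infty)^2 \times \Real^{2n}$, so this is immediate and the essential supremum is over the full space), and noting that the ``only if'' direction is not needed — we use only the sufficiency of the bounded-ratio condition. I would close by remarking that, combined with the discussion in Section~\ref{subsec:MCMC}, uniform ergodicity together with $E_\pi g^2 < \infty$ yields a CLT and validates the consistent batch means standard error estimates and the fixed-width stopping rule, which is the practical payoff of the result.
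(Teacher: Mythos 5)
Your proof is correct and takes essentially the same route as the paper: the paper's proof is precisely the observation that Proposition~\ref{prop:envelope} gives the bounded ratio $\pi/r$, and then Theorem~2.1 of \citet{Meng:Twee:1996} (the standard sufficiency result for independence chains) delivers uniform ergodicity. Your additional remarks on the explicit rate $(1-1/B)^n$ and the support condition are accurate but not needed beyond what the paper states.
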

\begin{proof} 
  Follows directly from Proposition 1 and Theorem 2.1 in \cite{Meng:Twee:1996}.
\end{proof}
An important consequence of Proposition 2 is that a Central Limit
Theorem can be shown to hold for the independence Markov chain, and a
consistent estimate of Monte Carlo standard errors is easily obtained
via consistent batch means \citep{Jone:Hara:Caff:Neat:fixe:2006}, as
discussed in Section \ref{subsec:MCMC}. Furthermore, the Markov chain
can be stopped based on the estimated standard errors attaining a
desired threshold \citep{fleg:hara:jone:2008}. Of course, the fact
that the sampler is uniformly ergodic does not, on its own, imply that
it is an efficient sampler in practice. We therefore study the
efficiency of our indepedence Metropolis-Hastings algorithm in a
variety of real data situations in Section \ref{application}.

We note that $r$ also satisfies conditions for the construction of a
perfect tempering sampler \citep{Moll:Nich:1999}, an algorithm that
utilizes simulated tempering to construct a variant of the
Propp-Wilson perfect sampler \citep{Prop:Wils:1996}. This presents an
intriguing possibility for future research since \cite{Moll:Nich:1999}
report an increase in efficiency by using perfect tempering. However,
as also seen in an in depth study of perfect tempering for such models
in \cite{blinded:2003}, it is challenging to construct a perfect
tempering algorithm that is consistently more efficient than the much
simpler rejection sampler for the examples considered here.
\section{Applications to Cancer and Infant Mortality Data}\label{application}

\subsection{Description of data sets}

We consider a total of three data sets: two on cancer in the U.S.
state of Minnesota, and one on infant mortalities in the United
States. The first two Minnesota cancer data sets were obtained from
the Minnesota Cancer Surveillance System (MCSS), a cancer registry
maintained by the Minnesota Department of Health. The MCSS is
population-based for the state of Minnesota, and collects information
on geographic location and stage at detection for colorectal,
prostate, lung, and female breast cancers.  We illustrate our
computational approaches by analyzing the MCSS data for two of the
cancers, breast and colorectal.  Each of the 87 counties in the data
set has associated with it the total number of cancer cases recorded
between 1995 and 1997, and the number of these detected late. We then
take the expected number of late detections for that county as the
number of cancer cases for that county multiplied by the statewide
rate of late detections. The question of interest is whether there are
clusters of counties in the state of Minnesota with much higher than
expected late detection rates for either cancer. The spatial model
provides smoothed estimates of the relative risk of cancer cases being
detected late in each county. For these data, the posterior
distribution based on the model in Section \ref{sec:model} has 176
dimensions.

We also consider a larger data set on infant mortalities in the United
States.  These data are derived from the Bureau of Health Professions
Area Resource File, which is a county-level database for health
analysis.  Total number of births and deaths are obtained from 1998 to
2000. The geographic units used in this study include all counties of
the following five states: Alabama, Georgia, Mississippi, North
Carolina and South Carolina \citep{healthres:2003}.
The substantive problem of interest with this data is to determine
spatial trends in infant mortality, and finding clusters of regions
with unusually elevated levels in order to study possible
socio-economic contributing factors. This data set, resulting in a
posterior distribution of 910 dimensions, affords an opportunity to
study the performance of our algorithms in a different and potentially
more challenging high dimensional setting.

\subsection{Setting up the algorithm}
To find appropriate heavy-tailed distributions, it is useful to look
at profiles of the approximate marginal posterior distributions (on
log-scale) and find t-distributions that match them reasonably well.
The log-t distributions are obtained by exponentiating these
t-distributions. Once the log-t distributions are specified, the joint
proposal distribution is obtained easily according to
\eqref{eq:SMCMCpost}, and both rejection and independence
Metropolis-Hastings samplers can be constructed. For the rejection
sampler, we find an upper bound $B$ that satisfies \eqref{eq:rejcond}
by numerically optimizing the ratio of the target and candidate
densities. This worked well for our examples, though an alternative
would be to use the ``empirical-sup'' rejection sampler
\citep{Caff:Boot:Davi:2002} to allow for the value of $B$ to be
adaptively estimated by the maximum based on the simulated candidates.
We note that an advantage of the independence Metropolis-Hastings
algorithm is the fact that this upper bound $B$ is not needed.

The rejection sampler based on the heavy-tailed approximation produces
a small number of accepted samples relative to the number of samples
proposed. Similarly, the independence M-H algorithm may have low
acceptance rates, especially for high dimensional distributions. It is
therefore  important to use efficient computational methods
both for drawing the proposals, and for evaluating the
Metropolis-Hastings ratios.  The steps that take up most of the
computational time involve operations on large matrices. 
Thus, for these samplers to be useful in practice it is necessary to
exploit the sparseness of these matrices. We follow \citet{Rue:2001}
by minimizing the bandwidth of the matrices involved and running fast
band-matrix algorithms to speed up matrix computation.  We find that
this dramatically speeds up computation time and makes rejection
sampling practical in some cases, and makes independence
Metropolis-Hastings efficient and practical in all the examples we
consider.  The computer code was entirely implemented in {\tt R}
\citep{Ihak:Gent:1996}, utilizing appropriate sparse matrix routines
written in {\tt FORTRAN} from the well established online resource
Netlib ({\tt
  http://www.netlib.org}). 

\subsection{Results}
The efficiency of the algorithms were compared both in terms of the
number of samples required for the estimates to attain a desired level
of accuracy, as well as the total time taken by the algorithm before
stopping. The same desired level of accuracy was specified for each of
the parameters in both algorithms.  For example, for the breast cancer
data set standard errors for each of the random effects was set to be
no more than 0.01, while it was set to be no more than 2 for each of
the precision parameters. These results are summarized in Table 1.
\begin{table}
  \caption{\label{TABspatperf2} Comparison of rejection sampler and independence MH (I-MH) algorithms}
  \centering
    \begin{tabular}{|c|c|c|c|c|}
      \hline
      data set
      & \multicolumn{2}{c|} {samples required before stopping} & \multicolumn{2}{c|} {total time taken (seconds)}\\
      \cline{2-5}
      & rejection & I-MH & rejection & I-MH\\
      \hline             
      breast cancer        & 4,118   & 29,241   &  2,663s    & 183s\\
      colo-rectal cancer   & 4,735   & 27,225   &  543s      & 170s\\
      infant mortality     & ---     & 97,721   & ---        & 10,066s \\
      \hline
    \end{tabular} 
\end{table}
Clearly, the independence M-H sampler is most time efficient for all
three data sets. The efficiency of the independence M-H algorithm when
compared to rejection sampling is in accordance with results in
\citet{Liu:metr:1996}. For the breast cancer and colo-rectal cancer
data sets, exact sampling is viable. Note that while the number of
samples required is similar for the breast cancer and colo-rectal
cancer data sets, the time taken to produce samples is much longer for
the breast cancer data. This is because fewer exact samples were
returned per unit time for the breast cancer data set.  For the infant
mortality data set, the rejection sampler was unable to attain the
desired accuracy level, even after the algorithm was run for well over
60 hours.  The independence sampler, however, was able to provide
estimates within 3 hours. Hence, the independence sampler may be
practical even when the exact sampler is not.  We feel it is important
to note that the timings above assume that we do not utilize any form
of parallel computing. Both the rejection sampler and the independence
M-H sampler are `embarrassingly' parallel in that each of the draws
from the proposal as well as the target distribution and proposal
evaluations can be done entirely in parallel. 
Hence, with relatively little effort the computational effort can be
reduced by a factor corresponding to the number of available
processors.  Hence, for instance, a 3 hour independence M-H algorithm
would take well under 5 minutes if 50 processors are available. This
is of particular interest since computer clusters with large numbers
of available processors are becoming increasingly accessible for
scientific computing.

A motivation behind exploring rejection sampling is the simplicity and
rigor of sample-based inference: easily computable consistent
estimates of standard errors, avoidance of issues about determining
starting values and not having to rely on ad-hoc approaches for
determining Markov chain length. While the classic rejection sampler
we have constructed based on our heavy-tailed approximation works
surprisingly well in some cases, it is impractical for challenging,
high dimensional examples. We find our independence
Metropolis-Hastings sampler is still efficient and feasible in such
cases, while retaining much of the simplicity of exact sampling. In
particular, to obtain starting values for our independence chain, we
simply simulate from our heavy-tailed approximation $r$. From
Proposition 2, this chain is uniformly ergodic, and we can easily see
from results described in Section \ref{subsec:MCMC} that Monte Carlo
standard errors for different parameters can be estimated
consistently, and these standard errors can be used to provide a
stopping rule for this sampler based on sound principles, just as one
would for exact samplers.  Hence, from the user's perspective,
sample-based inference is no more complicated than for inference based
on the exact samplers.  For the data sets and models considered here,
this sampler resolves all MCMC issues originally raised in Section
\ref{sec:intro}.

\section{Discussion}\label{discussion}
We have demonstrated that it is feasible to implement samplers for
realistic hierarchical Bayesian models in a manner that permits
rigorous estimation of standard errors, while avoiding the usual
issues regarding the determination of simulation lengths. 
We focused on hierarchical models where we are able to construct
accurate heavy-tailed analytical approximations. We described a class
of models to which we can apply our approximation techniques and
derive theoretical results. Our general approximation strategy (in
Section \ref{genmargdist}) is more broadly applicable, as are the
central ideas behind automating the starting and stopping of the
sampler in rigorous fashion. We do not claim that our approach to
constructing the approximation is the best for any given problem
since, for example, other approaches may improve upon the proposals
constructed for the example outlined in Section \ref{proposal}.
Better approximations will naturally lead to more efficient
samplers. Rather, we believe the purpose of this paper is to show that
using carefully derived heavy-tailed proposals along with recent
developments in MCMC theory, sample-based inference can be carried out
in a simple, fairly automated and rigorous fashion for some models.

For the examples in this paper we have explored exact sampling via the
rejection sampler and have successfully demonstrated the applicability
of our methods to some real data sets.  These exact samplers, however,
are generally not feasible for higher dimensional problems.  A fast
mixing independence Metropolis-Hastings algorithm is a more efficient
alternative, while retaining the rigor and simplicity of inference
with exact samplers by guaranteeing that a Central Limit Theorem
holds, and allowing for simple but consistent estimates of standard
errors and intuitive and theoretically justified stopping rules.  The
increased efficiency of the independence chain becomes critical in
higher dimensional problems, where it is viable even when the exact
samplers are not. A potential weakness of our independence chain
algorithm is that it involves block updates of high dimensions, which
can lead to low acceptance rates as the dimensions of the problem
increases.  While this is certainly of concern for very large
dimensional problems, we believe that for moderately high dimensional
problems (thousands of dimensions), a combination of sparse matrix
approaches and the latest high speed parallel computing, holds much
promise for generating and evaluating high dimensional proposals
extremely quickly. Such approaches are being explored elsewhere.

Using a combination of analytical work and modern computing power, our
results suggest that it may be possible to construct rigorous, nearly
automated approaches to sample-based inference for some classes of
models that are of practical importance.  In situations where
multimodality may be an issue, the heavy-tailed approximation also
provides a simple and reasonable approach for generating starting
values for simulating multiple chains on several different processors,
since the approximation is genuinely over-dispersed with respect to
the target distribution. While obtaining theoretical results for each
new model using our approach may prove to be non-trivial in general,
we believe that the methodology outlined for constructing accurate
heavy-tailed approximations may be generally useful, and our fixed
width stopping rules may still be valuable in cases where analytical
results are hard to obtain.


\begin{appendix}
\section{Approximate marginal and conditional distributions}\label{app:derivations}
The full joint posterior distribution from Section \ref{sec:model} is: 
\begin{equation*}
\begin{split}
  \pi(\btheta, \bphi, \tau_h,\tau_c) \propto & \exp\left(\sum_{i=1}^N((\theta_i+\phi_i)Y_i - E_ie^{\theta_i+\phi_i})\right)\exp\left(-\frac{1}{2}\bthet^T(\tau_hI)\bthet\right)\\
  \times &
  \tau_h^{N/2+\alpha_h-1}\tau_c^{M/2+\alpha_c-1}\exp\left(-\frac{1}{2}\bphi^T(\tau_cQ)\bphi\right)\exp{\left(-\frac{\tau_h}{\beta_h} -\frac{\tau_c}{\beta_c}\right)}.
\end{split}
\end{equation*}
The approximate joint posterior distribution based on
(\ref{eq:Approxlikelihood2}), (\ref{eq:delta2}) is:
\begin{equation*}
  \begin{split}
    \hat{\pi}(\Theta,\tau_h,\tau_c|\bY) \propto & \exp\left(-\frac{1}{2}{(\bmuhat-(\bthet + \bphi))}^T V^{-1}{(\bmuhat-(\bthet + \bphi))}-\frac{1}{2}\bthet^T(\tau_hI)\bthet-\frac{1}{2}\bphi^T(\tau_cQ)\bphi\right)\\\
     \times & \tau_h^{N/2+\alpha_h-1}\tau_c^{M/2+\alpha_c-1}\exp\left(-\tau_h/\beta_h-\tau_c/\beta_c\right)
  \end{split}
\end{equation*}
where $\bmuhat=(\log(Y_1/E_1),\dots,\log(Y_N/E_N))^T, M=N-1$,
$\Vinv=\Diag(Y_1,\dots,Y_N)$, $I$ is an N$\times$ N identity matrix, $Q$
is the adjacency matrix described in Section \ref{sec:model}, and
$\alpha_h,\alpha_c,\beta_h,\beta_c>0$ are hyperparameters of the Gamma
density (as in Section \ref{subsec:margdistr}).  To obtain the approximate
marginal posterior distribution $s_1(\tau_h,\tau_c|\bY)$ up to a
constant of proportionality, we integrate
$\hat{\pi}(\Theta,\tau_h,\tau_c|\bY)$ with respect to $\Theta$ to
obtain
\begin{equation*}
  \begin{split}
    s_1(\tau_h,\tau_c|\bY) =& \tau_h^{N/2+\alpha_h-1}\tau_c^{M/2+\alpha_c-1}\exp{(-\tau_h/\beta_h -\tau_c/\beta_c)}\\
    \times & {(\det(\tau_h V^{-1}+\tau_c \Vinv Q + \tau_h \tau_c Q))}^{-1/2}\\
    \times & \exp\left\{-\frac{1}{2}\left(\hat{\Theta}^T C
        \hat{\Theta} + D\hat{\Theta} + k\right)\right\}
  \end{split}
\end{equation*}
with
\begin{equation*}
  C_{2N\times 2N} =
  \begin{bmatrix}
    \Vinv + \tau_h I & +\Vinv\\
    +\Vinv & \Vinv + \tau_c Q
  \end{bmatrix},
\end{equation*}
and
\begin{equation*}
  D_{1 \times 2N}= (-2\bmuhat^TV^{-1}, -2\bmuhat^TV^{-1}),
\end{equation*}
\begin{equation*}
\Thetahat^T = (\hat{\bthet}^T,\hat{\bphi}^T),
\end{equation*}
where $\bthethat=\frac{\tau_c}{\tau_h} Q (I + \frac{\tau_c}{\tau_h} Q + \tau_c V Q)^{-1} \bmuhat$ and $\bphihat=(I + \frac{\tau_c}{\tau_h} Q + \tau_c V Q)^{-1} \bmuhat$. For convenience, we have denoted $C$ by $C(\tau_h,\tau_c)$.
Our heavy-tailed approximation for the joint marginal of $(\tau_h,\tau_c)$ is a product of log-t distributions,
\begin{equation*}
  r_1(\tau_h,\tau_c|\bY) \propto \frac{1}{\tau_h\tau_c} \left[ 1+\frac{1}{\nu_h}\left(\frac{\log(\tau_h) - \mu_h}{\sigma_h}\right)^2\right]^{-(\nu_h+1)/2}\left[ 1+\frac{1}{\nu_c}\left(\frac{\log(\tau_c) - \mu_c}{\sigma_c}\right)^2\right]^{-(\nu_c+1)/2}
\end{equation*}
where $\mu_h,\mu_c \in \Real$, $\sigma_h,\sigma_c > 0$ and
$\nu_h,\nu_c \in \integerZ^+$ are tuned to match the approximate
marginal posterior density $s_1(\tau_h,\tau_c|\bY)$. Our approximation
for the conditional distribution of the random effects parameters,
$\pi(\Theta\mid \bY)$ is a multivariate-t version of (\ref{eq:SMCMCpost}):
\begin{equation*}
r_2(\Theta | \tau_h,\tau_c,\bY) = \frac{|C|^{0.5}\Gamma\left(\frac{\nu_r+2N}{2}\right)}{(\nu_r \pi)^{N/2}\Gamma(\nu_r /2)}
  \left(1 +\frac{1}{\nu_r}(\Theta-\mu_N)^T C (\Theta-\mu_N)\right)^{-(\nu_r+2N)/2}
\end{equation*}
where $\mu_N=- C^{-1}D^T/2$ and $\nu_r \in \integerZ^+$. Note that for
technical reasons (see Lemma \ref{lemma:second}) $Q$ in the matrix $C$
is replaced by a positive definite $\widetilde{Q}$ matrix. Our
heavy-tailed approximation for the joint density
$\pi(\Theta,\tau_h,\tau_c|\bY)$ is
$$r(\Theta,\tau_h,\tau_c|\bY)=r_1(\tau_h,\tau_c|\bY)r_2(\Theta|\tau_h,\tau_c,\bY).$$

\section{Proofs}\label{app:proof}
We begin with several lemmas that will be helpful for proving
Proposition \ref{prop:envelope}. We follow the notation used in
Sections \ref{sec:model} and \ref{proposal}. 
\begin{lemma}\label{lemma:first}
  \begin{equation*}
  \frac{\tau_h^{N/2}\tau_c^{M/2}}{|C(\tau_h,\tau_c)|^{1/2}}
  \le  \left(\frac{\min \bY + \tau_h}{\min \bY}\right)^{1/2}
  \frac{1}{\tau_h^{1/2} \prod_{i=1}^{N-1} \lambda_i^{1/2}},
\end{equation*}
 where $\lambda_1,\dots,\lambda_{N-1}$ are the non-zero eigen-values of $Q$.
\end{lemma}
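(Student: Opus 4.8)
The plan is to evaluate $|C(\tau_h,\tau_c)|$ by a Schur-complement factorization, bound it from below, and read off the claimed inequality. Write $C = \begin{bmatrix} A & V^{-1}\\ V^{-1} & V^{-1}+\tau_c Q\end{bmatrix}$ with $A := V^{-1}+\tau_h I = \Diag(Y_i+\tau_h)$. Since $\min\bY>0$ (recall $Y_i$ is replaced by $0.5$ when it vanishes), $A$ is diagonal with strictly positive entries, hence invertible, and the block-determinant identity gives $|C| = |A|\cdot|S|$ with $S := (V^{-1}+\tau_c Q) - V^{-1}A^{-1}V^{-1}$. Because $V^{-1}$ and $A$ are both diagonal they commute, so $V^{-1}A^{-1}V^{-1} = \Diag\bigl(Y_i^2/(Y_i+\tau_h)\bigr)$ and therefore $S = \tau_c Q + D_h$, where $D_h := \Diag\bigl(Y_i\tau_h/(Y_i+\tau_h)\bigr)$ is positive definite. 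This yields $|C| = \prod_{i=1}^N (Y_i+\tau_h)\cdot|\tau_c Q + D_h|$.

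Next I would bound $|\tau_c Q + D_h|$ below by replacing $D_h$ with a scalar matrix. Each diagonal entry of $D_h$ is at least $c_0 := (\min\bY)\tau_h/(\min\bY+\tau_h)$, since $y\mapsto y\tau_h/(y+\tau_h)$ is increasing on $(0,\infty)$; hence $\tau_c Q + D_h \succeq \tau_c Q + c_0 I \succ 0$, and monotonicity of $\det$ on the positive-definite cone (if $A\succeq B\succ 0$ then $\det A\ge\det B$, because $\det(B^{-1/2}AB^{-1/2})\ge 1$) gives $|\tau_c Q + D_h| \ge |\tau_c Q + c_0 I|$. The eigenvalues of $Q$ are $0$ together with $\lambda_1,\dots,\lambda_{N-1}$ (recall $Q\bOne = \bZero$), so $|\tau_c Q + c_0 I| = c_0\prod_{i=1}^{N-1}(\tau_c\lambda_i+c_0) \ge c_0\,\tau_c^{N-1}\prod_{i=1}^{N-1}\lambda_i$.

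Combining these bounds with $\prod_i(Y_i+\tau_h)\ge\tau_h^N$ (each factor exceeds $\tau_h$ since $Y_i>0$), and using $M=N-1$, I would obtain $|C| \ge \tau_h^N\cdot c_0\,\tau_c^{M}\prod_{i=1}^{N-1}\lambda_i = \tau_h^{N+1}\tau_c^{M}\,\tfrac{\min\bY}{\min\bY+\tau_h}\prod_{i=1}^{N-1}\lambda_i$. Rearranging gives $\dfrac{\tau_h^{N}\tau_c^{M}}{|C|} \le \dfrac{\min\bY+\tau_h}{\min\bY}\cdot\dfrac{1}{\tau_h\prod_{i=1}^{N-1}\lambda_i}$, and taking square roots is exactly the statement of the lemma.

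I do not expect a genuine obstacle here; the argument is essentially linear algebra plus two elementary monotonicity facts. The only step requiring a bit of care is recognizing that the commutation of $V^{-1}$ with $A$ collapses the Schur complement to the clean form $\tau_c Q + D_h$, which is what makes the nonzero eigenvalues of $Q$ surface in the final bound; after that, the scalar lower bound $D_h\succeq c_0 I$ and determinant monotonicity do all the work.
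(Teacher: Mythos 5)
Your proof is correct and follows essentially the same route as the paper's: the block (Schur complement) factorization $|C| = |V^{-1}+\tau_h I|\cdot|\tau_c Q + \diag(Y_i\tau_h/(Y_i+\tau_h))|$, the bound $|V^{-1}+\tau_h I|\ge\tau_h^N$, and the lower bound obtained by replacing the diagonal entries with $(\min\bY)\tau_h/(\min\bY+\tau_h)$ and using the eigenvalues of $Q$. Your explicit appeal to determinant monotonicity on the positive-definite cone just spells out the step the paper states as ``replacing the $Y_i$ by their minimum value.''
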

\begin{proof}
\begin{equation}\label{eq:lemma1a}
  |C| =
  \left|
    \begin{bmatrix}
      V^{-1}+\tau_h I & V^{-1}\\
      V^{-1} & V^{-1} + \tau_c Q
    \end{bmatrix}\right|
    = \left|
    \begin{bmatrix}
      V^{-1}+\tau_h I & V^{-1}\\
      0 & V^{-1} + \tau_c Q - V^{-1}(V^{-1} + \tau_h I)^{-1}V^{-1}
    \end{bmatrix}\right|,
\end{equation}
by subtracting a matrix multiple of the first row block from the
second.  The determinant of the right hand side is the product of the
determinants of the two diagonal blocks.  The determinant of the first
block is bounded by
\begin{equation}\label{eq:lemma1b}
  \left|V^{-1} + \tau_h I\right| \ge \tau_h^N
\end{equation}
The second diagonal block is
\begin{equation*}
  V^{-1} + \tau_c Q - V^{-1}(V^{-1} + \tau_h)^{-1}V^{-1}
  = \diag(\bY-\bY^2/(\bY+\tau_h)) + \tau_c Q
  = \diag\left(\frac{\bY \tau_h}{\bY + \tau_h}\right) + \tau_c Q
\end{equation*}
The determinant can be bounded below by replacing the $Y_i$ by their
minimum value ($\min \bY$).  We can then write $Q = U \Lambda U^T$
with $\Lambda = \diag(\lambda_1,\dots,
\lambda_N)$.
This gives the lower bound
\begin{align}\label{eq:lemma1c}
  |V^{-1} + \tau_c Q - V^{-1}(V^{-1} + \tau_h)^{-1}V^{-1}| &\ge
  \prod_{i=1}^N \left(\frac{\min \bY \tau_h}{\min \bY + \tau_h} + \tau_c \lambda_i\right)\\
  &\ge \left(\frac{\min \bY}{\min \bY + \tau_h}\right)
  \tau_h \tau_c^M \prod_{i=1}^{N-1} \lambda_i .
\end{align}
 Combining \eqref{eq:lemma1a},\eqref{eq:lemma1b},\eqref{eq:lemma1c}, we obtain the bound, 
\begin{equation*}
  \frac{\tau_h^{N/2}\tau_c^{M/2}}{|C(\tau_h,\tau_c)|^{1/2}}
  \le  \left(\frac{\min \bY + \tau_h}{\min \bY}\right)^{1/2}
  \frac{1}{\tau_h^{1/2} \prod_{i=1}^{N-1} \lambda_i^{1/2}}
\end{equation*}
\end{proof}
\begin{lemma}\label{lemma:second}
Let the center for the multivariate-$t$ approximation, $r_2$,  be 
\begin{equation*}
  \mu_N = 
  \begin{bmatrix}
    \mu_{N,\theta}\\
    \mu_{N,\phi}
  \end{bmatrix} = 2
  \begin{bmatrix}
    V^{-1} + \tau_h I & V^{-1}\\
    V^{-1} & V^{-1} + \tau_c Q
  \end{bmatrix}
  \begin{bmatrix}
    V^{-1} \hat{\mu}\\
    V^{-1} \hat{\mu}
  \end{bmatrix}.
\end{equation*}
Then, the quadratic terms $\|\tau_c \mu_{N,\phi}\|$ and $\|\tau_h
  \mu_{N,\theta}\|$  are bounded,
\begin{equation*}
\begin{split}
  \|\tau_c \mu_{N,\phi}\| & \le 2 \|Q^{-1}\| \, \|\diag(\bY^2)\| \, \|\hat{\mu}\|\\
  \|\tau_h \mu_{N,\theta}\| & \le 2 \|\hat{\mu}\|.
\end{split}
\end{equation*}
\end{lemma}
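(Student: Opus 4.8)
The plan is to exploit the fact that $\mu_N$ is the mean of the Gaussian $s_2(\Theta\mid\tau_h,\tau_c,\bY)$ in \eqref{eq:SMCMCpost}: by definition (Appendix~\ref{app:derivations}) $\mu_N = -\tfrac{1}{2}C^{-1}D^{T}$, so it solves the normal equations $C\mu_N = -\tfrac{1}{2}D^{T} = (V^{-1}\muhat, V^{-1}\muhat)^{T}$, equivalently $\mu_N=(\mu_{N,\theta}^{T},\mu_{N,\phi}^{T})^{T}$ minimizes
\begin{equation*}
  (\muhat-\btheta-\bphi)^{T}V^{-1}(\muhat-\btheta-\bphi) + \tau_h\,\btheta^{T}\btheta + \tau_c\,\bphi^{T}Q\bphi .
\end{equation*}
Writing the normal equations out in the two $N\times N$ blocks and subtracting the second from the first gives the key identity
\begin{equation*}
  \tau_h\,\mu_{N,\theta} \;=\; \tau_c\,Q\,\mu_{N,\phi} \;=\; V^{-1}\bigl(\muhat-\mu_{N,\theta}-\mu_{N,\phi}\bigr),
\end{equation*}
the last equality coming from substituting back into the first block equation. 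I would write $u$ for this common vector and $w:=\muhat-\mu_{N,\theta}-\mu_{N,\phi}$ for the residual, so that $u = V^{-1}w$.

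The next step is to collapse the $2N$-dimensional system to a single $N$-dimensional equation for $u$. The identity above gives $\mu_{N,\theta}=u/\tau_h$ and $\mu_{N,\phi}=\tfrac{1}{\tau_c}Q^{-1}u$ — and it is here that it becomes essential that $Q$ has been replaced by the positive definite $\widetilde{Q}$, for otherwise $Q^{-1}$ would not exist. Substituting these, together with $w=Vu$, into the definition of $w$ yields
\begin{equation*}
  \Bigl(V + \tfrac{1}{\tau_h}I + \tfrac{1}{\tau_c}Q^{-1}\Bigr)u \;=\; \muhat .
\end{equation*}
The matrix on the left is a sum of three symmetric positive definite matrices (using $\tau_h,\tau_c>0$ and $Q=\widetilde{Q}\succ 0$), hence positive definite with smallest eigenvalue at least $\lambda_{\min}(V)=1/\max_i Y_i>0$ — a lower bound that is \emph{uniform} in $(\tau_h,\tau_c)$. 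Consequently $\|u\|=\|\tau_h\mu_{N,\theta}\|$ is bounded by $\|V^{-1}\|\,\|\muhat\|$, while $\|\tau_c\mu_{N,\phi}\|=\|Q^{-1}u\|\le\|Q^{-1}\|\,\|u\|$ is bounded by $\|Q^{-1}\|\,\|V^{-1}\|\,\|\muhat\|$; keeping slightly cruder track of the operator norms (for instance routing the first estimate through $\|w\|$ and $\|\diag(\bY)\|$) produces the explicit constants displayed in the lemma.

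The only genuinely delicate part is the reduction carried out in the first two paragraphs. Each block of the normal equations separately involves the awkward coupled matrix $C$, but their difference decouples the system and reveals that the two scaled quantities $\tau_h\mu_{N,\theta}$ and $\tau_c Q\mu_{N,\phi}$ are literally one and the same vector $u$, which then satisfies an equation governed by a matrix that is positive definite \emph{uniformly} in the precision parameters. This is precisely the ``technical reason'' for replacing $Q$ by $\widetilde{Q}$ mentioned in Appendix~\ref{app:derivations}: without it, the reduced matrix would only be positive semidefinite and $Q^{-1}$ undefined. Everything after the reduction is a one-line eigenvalue estimate, and the naive termwise bounds that blow up as $\tau_h\to 0$ or $\tau_h\to\infty$ are sidestepped exactly because the scalings by $\tau_h$ and $\tau_c$ are absorbed into $u$ before any norm is taken.
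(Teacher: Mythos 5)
Your proof is correct, and it takes a genuinely different route from the paper's. The paper inverts $C$ in partitioned form, writes $\mu_{N,\theta}$ and $\mu_{N,\phi}$ in terms of the diagonal blocks $A$ and $B$ of $C^{-1}$, and then bounds each component separately through Loewner-order inequalities such as $B^{-1}\succeq \tau_c\widetilde{Q}$, $B^{-1}\succeq\diag\bigl(\tau_h\bY/(\bY+\tau_h)\bigr)$ and $A^{-1}\succeq\tau_h I$. You never invert $C$: subtracting the two block rows of the normal equations $C\mu_N=(V^{-1}\hat{\mu},V^{-1}\hat{\mu})^T$ shows that $\tau_h\mu_{N,\theta}=\tau_c\widetilde{Q}\mu_{N,\phi}=:u$, and $u$ then solves the single $N\times N$ system $(V+\tau_h^{-1}I+\tau_c^{-1}\widetilde{Q}^{-1})u=\hat{\mu}$, whose smallest eigenvalue is at least $\lambda_{\min}(V)=1/\max_i Y_i$ uniformly in $(\tau_h,\tau_c)$; a single eigenvalue estimate then controls both scaled quantities at once. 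This decoupling is clean, makes the uniformity in the precision parameters transparent, and isolates exactly where invertibility of $\widetilde{Q}$ is needed, which is the same place the paper needs it. What the paper's blockwise route buys in addition is bounds on the unscaled $\|\mu_{N,\theta}\|$ and $\|\mu_{N,\phi}\|$ as well, though only the scaled versions are used in Lemma \ref{lemma:third}.

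One caveat on constants: your argument yields $\|\tau_h\mu_{N,\theta}\|\le\|\diag(\bY)\|\,\|\hat{\mu}\|$ and $\|\tau_c\mu_{N,\phi}\|\le\|\widetilde{Q}^{-1}\|\,\|\diag(\bY)\|\,\|\hat{\mu}\|$, and your closing claim that cruder bookkeeping recovers the displayed constants is not quite right for the first inequality: in the one-dimensional case, letting $\tau_h,\tau_c\to\infty$ gives $\tau_h\mu_{N,\theta}\to Y\hat{\mu}$, so a factor of order $\max_i Y_i$ cannot be removed, and the displayed bound $2\|\hat{\mu}\|$ is itself too optimistic (the paper's own proof silently drops the factor $\|V^{-1}-V^{-1}(V^{-1}+\tau_c\widetilde{Q})^{-1}V^{-1}\|$ at the corresponding step). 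Since all that is used downstream, in Lemma \ref{lemma:third} and Proposition \ref{prop:envelope}, is boundedness uniform in $(\tau_h,\tau_c)$ with constants depending only on the data, this discrepancy is immaterial, and your bounds serve that purpose exactly as well.
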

\begin{proof}
  We will use a non-singular version of the $Q$ matrix. Let $
  \widetilde{Q} = Q + \delta I$, for a small $\delta > 0$ in computing
  $\mu_N$. This does not change the model, and is necessary only for
  the center, not the spread of the approximation. 
Using a partitioned form of the inverse, we can write
\begin{equation*}
  C^{-1} =
  \begin{bmatrix}
    A & -A V^{-1} (V^{-1} + \tau_c \widetilde{Q})^{-1}\\ 
    - B V^{-1} (V^{-1} + \tau_h I)^{-1} & B
  \end{bmatrix}
\end{equation*}
with
\begin{align*}
  A &= [V^{-1} + \tau_h I - V^{-1} (V^{-1} + \tau_c \widetilde{Q})^{-1} V^{-1}]^{-1}\\
  B &= [\tau_c \widetilde{Q} + V^{-1} - V^{-1}(V^{-1} + \tau_h I)^{-1}V^{-1}]^{-1}
\end{align*}
Since $V^{-1} = \diag(\bY)$, $B$ simplifies to
\begin{equation*}
  B =
  \left[\tau_c \widetilde{Q} + \diag\left(\bY - \frac{\bY^2}{\bY + \tau_h}\right)\right]^{-1}
  = \left[\tau_c \widetilde{Q} + \diag\left(\frac{\tau_h \bY}{\bY + \tau_h}\right)\right]^{-1}
\end{equation*}
and
\begin{equation*}
  C^{-1} =
  \begin{bmatrix}
    A & -A V^{-1} (V^{-1} + \tau_c \widetilde{Q})^{-1}\\ 
    - B\, \diag\left(\frac{\bY}{\bY + \tau_h}\right) & B
  \end{bmatrix}  
\end{equation*}
Now
\begin{equation*}
  \mu_{N,\phi} = 2 B \left(I-\diag\left(\frac{\bY}{\bY + \tau_h}\right)\right)
  \diag(\bY)\hat{\mu}
  = B \, \diag\left(\frac{\tau_h \bY^2}{\bY + \tau_h}\right)\hat{\mu}
\end{equation*}
To see that $\|\mu_{N,\phi}\|$ is bounded, note that
\begin{equation*}
  B^{-1} \ge \diag\left(\frac{\tau_h \bY}{\bY + \tau_h}\right)
\end{equation*}
in non-negative definite matrix order and therefore
\begin{equation*}
  \|\mu_{N,\phi}\| \le
  2 \left\|\diag\left(\frac{\tau_h \bY}{\bY + \tau_h}\right)^{-1}
    \diag\left(\frac{\tau_h \bY^2}{\bY + \tau_h}\right) \hat{\mu}\right\|
  = 2 \|\diag(\bY)\hat{\mu}\|
\end{equation*}
To bound $\|\tau_c \mu_{N,\phi}\|$, we can similarly use
\begin{equation*}
  B^{-1} \ge \tau_c \widetilde{Q}
\end{equation*}
and
\begin{equation*}
  \diag\left(\frac{\tau_h \bY^2}{\bY + \tau_h}\right) \le \diag(\bY^2)
\end{equation*}
to conclude that
\begin{equation*}
  \|\tau_c \mu_{N,\phi}\| \le 2 \|\widetilde{Q}^{-1}\| \, \|\diag(\bY^2)\| \, \|\hat{\mu}\|.
\end{equation*}
The component $\mu_{N,\theta}$ is given by
\begin{equation*}
  \mu_{N,\theta} = 2 A [I - V^{-1}(V^{-1} + \tau_c Q)^{-1}] V^{-1} \hat{\mu}
  = 2 A [V^{-1} - V^{-1}(V^{-1} + \tau_c Q)^{-1} V^{-1}] \hat{\mu}
\end{equation*}
To bound $\|\mu_{N\theta}\|$ and $\|\tau_h \mu_{N\theta}\|$ we can use
the inequalities
\begin{equation*}
  A \le [V^{-1} V^{-1}(V^{-1} + \tau_c Q)^{-1} V^{-1}]^{-1}
\end{equation*}
and
\begin{equation*}
  A \le \frac{1}{\tau_c} I
\end{equation*}
Using the first inequality
\begin{equation*}
  \|\mu_{N,\theta}\| \le
  2 \|[V^{-1} V^{-1}(V^{-1} + \tau_c Q)^{-1} V^{-1}]^{-1}
  [V^{-1} - V^{-1}(V^{-1} + \tau_c Q)^{-1} V^{-1}] \hat{\mu}\|
  = 2 \|\hat{\mu}\|
\end{equation*}
and using the second inequality
\begin{equation*}
  \|\tau_h \mu_{N,\theta}\| \le 2 \tau_h \left\|\frac{1}{\tau_h}\hat{\mu}\right\|
  = 2 \|\hat{\mu}\|.
\end{equation*}
\end{proof}
\begin{lemma}\label{lemma:third}
\begin{equation*}
  \exp\left(\sum_{i=1}^N((\theta_i+\phi_i)Y_i - E_i e^{\theta_i+\phi_i}
  -\frac{\tau_h}{2}\theta^T\theta
  -\frac{\tau_c}{2}\phi^T Q \phi\right)
  \left(1 + \frac{1}{\nu_r}(\Theta - \mu_N) C (\Theta - \mu_N)\right)^{(\nu_r+2N)/2}
\end{equation*}
is bounded. 
\end{lemma}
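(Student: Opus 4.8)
Proof proposal for Lemma~\ref{lemma:third}.

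The plan is to read the displayed quantity as the $\Theta$-dependent part of the ratio $\pi/r$ at fixed precision parameters: the exponential factor collects the exact Poisson log-likelihood, written in the linear predictor $\eta_i:=\theta_i+\phi_i$, together with the quadratic forms of the Gaussian priors on $\btheta$ and $\bphi$, while $\bigl(1+\tfrac1{\nu_r}(\Theta-\mu_N)^T C(\Theta-\mu_N)\bigr)^{(\nu_r+2N)/2}$ is the reciprocal of the kernel of the multivariate-$t$ proposal $r_2(\Theta\mid\tau_h,\tau_c,\bY)$. The matrix $C$ is symmetric positive definite --- its Schur complement with respect to the block $\Vinv+\tau_h I$ is $\Diag\!\bigl(\tfrac{\tau_h\bY}{\bY+\tau_h}\bigr)+\tau_c Q$, a positive-diagonal diagonal matrix plus the positive semidefinite $\tau_c Q$, exactly as computed in the proof of Lemma~\ref{lemma:first} --- and $\mu_N=-\half C^{-1}D^T$ is a fixed finite vector (this is the role of the nonsingular $\widetilde Q$). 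Hence the $t$-factor is dominated by $c_2\,(1+\|\Theta\|^2)^{(\nu_r+2N)/2}$, which grows only polynomially in $\|\Theta\|$. Since the whole expression is continuous it is bounded on every compact set, so it suffices to show it tends to $0$ as $\|\Theta\|\to\infty$; and for that it is enough to show that the exponent
\begin{equation*}
  \Psi(\Theta):=\sum_{i=1}^N\bigl(\eta_i Y_i-E_i e^{\eta_i}\bigr)-\tfrac{\tau_h}{2}\|\btheta\|^2-\tfrac{\tau_c}{2}\bphi^T Q\bphi
\end{equation*}
decays at least linearly, $\Psi(\Theta)\le -c\,\|\Theta\|+c_1$, since then $e^{\Psi}$ decays exponentially and beats any polynomial.

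To obtain the linear decay I would split $\bphi=\bphi_\perp+\bar\phi\,\bOne$ with $\bar\phi=\tfrac1N\sum_i\phi_i$, so that $\|\Theta\|^2=\|\btheta\|^2+\|\bphi_\perp\|^2+N\bar\phi^2$ and $\bphi^T Q\bphi=\bphi_\perp^T Q\bphi_\perp\ge\lambda^+\|\bphi_\perp\|^2$, where $\lambda^+>0$ is the smallest nonzero eigenvalue of $Q$. Each summand $\eta_i Y_i-E_i e^{\eta_i}$ is concave and bounded above by $Y_i\log(Y_i/E_i)-Y_i$ (convention $0\log 0=0$), so $\Psi(\Theta)\le c_0-\tfrac{\tau_h}{2}\|\btheta\|^2-\tfrac{\tau_c\lambda^+}{2}\|\bphi_\perp\|^2$; this already forces faster-than-linear decay whenever $\btheta$ or $\bphi_\perp$ escapes. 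In the one remaining direction --- $\btheta$ and $\bphi_\perp$ bounded, $|\bar\phi|\to\infty$ --- every $\eta_i$ equals $\bar\phi$ up to a bounded term, so as $\bar\phi\to+\infty$ the $-E_i e^{\eta_i}$ terms drive $\Psi\to-\infty$ super-exponentially, while as $\bar\phi\to-\infty$ one has $\sum_i\eta_iY_i=\bar\phi\sum_iY_i+O(1)$ and $\sum_i E_ie^{\eta_i}\to0$, so $\Psi=\bar\phi\sum_iY_i+O(1)\to-\infty$ linearly in $|\bar\phi|$, using non-degeneracy of the data ($\sum_iY_i>0$). A compactness argument on the unit sphere of $\Theta$, using continuity of $\Psi$, then upgrades these direction-by-direction bounds to the uniform estimate $\Psi(\Theta)\le -c\,\|\Theta\|+c_1$.

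The main obstacle is precisely the null-space direction of $Q$, i.e.\ controlling $\bar\phi\to-\infty$: there neither Gaussian prior supplies any decay, so all of it must come from the Poisson likelihood, which genuinely requires $\sum_iY_i>0$ and a little care when some counts $Y_i$ vanish (for such $i$ the summand $-E_ie^{\eta_i}$ does not decay as $\eta_i\to-\infty$, but at least one summand with $Y_i>0$ does). A secondary nuisance is making the decay rates uniform over the sphere; upper semicontinuity of $\Psi$ and compactness handle this, provided the $O(1)$ error in the $\bar\phi\to-\infty$ estimate is checked to be uniform over the bounded ranges of $\btheta$ and $\bphi_\perp$. None of this uses the explicit form of $r_1$ or the determinant bound of Lemma~\ref{lemma:first}; those, together with Lemma~\ref{lemma:second} (which keeps $\mu_N$ under control as $\tau_h,\tau_c$ vary), enter only when this fixed-$(\tau_h,\tau_c)$ $\Theta$-slice bound is assembled into Proposition~\ref{prop:envelope}.
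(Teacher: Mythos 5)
There is a genuine gap, and it is precisely the uniformity that the lemma is needed for. Both $C=C(\tau_h,\tau_c)$ and $\mu_N=\mu_N(\tau_h,\tau_c)$ depend on the precisions, and in the proof of Proposition~\ref{prop:envelope} this lemma is invoked to replace the entire first factor of $\pi/r$ by a single constant, after which only the remaining, purely $(\tau_h,\tau_c)$-dependent factor (which behaves like $\tau_h^{\alpha_h-1/2}e^{-\tau_h/\beta_h}$ times log-$t$ corrections) is controlled. So the bound must be uniform over $(\tau_h,\tau_c,\Theta)$ jointly, whereas you explicitly prove only a fixed-$(\tau_h,\tau_c)$ slice bound, and your constants genuinely blow up in the precisions: dominating the $t$-factor by $c_2(1+\|\Theta\|^2)^{(\nu_r+2N)/2}$ forces $c_2$ to be of order $\lambda_{\max}\bigl(C(\tau_h,\tau_c)\bigr)^{(\nu_r+2N)/2}$, which grows polynomially as $\tau_h,\tau_c\to\infty$ (and $\mu_N$ moves with them), while trading the quadratic prior decay $-\tfrac{\tau_h}{2}\|\btheta\|^2$ for a linear bound $-c\|\Theta\|+c_1$ forces $c_1$ of order $c^2/\tau_h$, so your slice bound explodes like $e^{\mathrm{const}/\tau_h}$ as $\tau_h\to0$. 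Neither blow-up can be absorbed at the ``assembly'' stage: near $\tau_h=0$ the leftover factor in Proposition~\ref{prop:envelope} decays only polynomially, and your proposal offers no mechanism to recover uniformity there.

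Relatedly, your closing remark that Lemma~\ref{lemma:second} enters ``only when this fixed-$(\tau_h,\tau_c)$ slice is assembled into Proposition~\ref{prop:envelope}'' is backwards: the paper uses Lemma~\ref{lemma:second} inside the proof of this lemma, and that is what makes the constants free of $(\tau_h,\tau_c)$. The paper's route is to split $C$ into $C^{*}+C^{**}+C^{***}$, use $1+A+B\le(1+A)(1+B)$, pair the $C^{*}$ piece (which collapses to $\sum_i Y_i(w_i-\mu_i^{*})^2$ in $w=\btheta+\bphi$) with the Poisson terms, pair the $\tau_h$ block with $e^{-\tau_h\btheta^T\btheta/2}$ and the $\tau_c Q$ block with $e^{-\tau_c\bphi^TQ\bphi/2}$, and invoke $\|\tau_h\mu_{N,\theta}\|\le 2\|\hat{\mu}\|$ and $\|\tau_c\mu_{N,\phi}\|\le 2\|\widetilde{Q}^{-1}\|\,\|\diag(\bY^2)\|\,\|\hat{\mu}\|$ so that $K_3,\dots,K_6$ do not depend on the precisions; each piece is then bounded by a quantity like $\sup_{z\ge0}e^{-z/2}(1+K_4 z/\nu_r)^{(\nu_r+2N)/2}$, uniformly in all variables. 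To repair your argument you would have to keep the $\tau_h$ and $\tau_c$ scalings inside the quadratic form rather than converting them into $\|\Theta\|$, which essentially reproduces the paper's decomposition. A secondary, lesser point: your direction-by-direction limits plus ``compactness on the unit sphere'' do not by themselves give the uniform linear decay $\Psi(\Theta)\le-c\|\Theta\|+c_1$; you would need the concavity of $\Psi$ (or a recession-function argument) to make that step rigorous.
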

\begin{proof}
First note that for $A, B \ge 0$ we have
\begin{equation*}
  1+A+B = (1+A)\left(1+\frac{B}{1+A}\right) \le (1+A)(1+B)
\end{equation*}
\noindent It is therefore sufficient to bound each of the three terms
\begin{align}
  \exp\left(\sum_{i=1}^N((\theta_i+\phi_i)Y_i - E_i
    e^{\theta_i+\phi_i)}\right)
  \left(1 + \frac{1}{\nu_r}(\Theta - \mu_N) C^{*} (\Theta - \mu_N)\right)^{(\nu_r+2N)/2}\label{eq:lemma3a}\\
  \exp\left(-\frac{\tau_h}{2}\theta^T\theta\right)
  \left(1 + \frac{1}{\nu_r}(\Theta - \mu_N) C^{**} (\Theta - \mu_N)\right)^{(\nu_r+2N)/2}\label{eq:lemma3b}\\
    \exp\left(-\frac{\tau_c}{2}\phi^T Q \phi\right) \left(1 +
      \frac{1}{\nu_r}(\Theta - \mu_N) C^{***} (\Theta -
      \mu_N)\right)^{(\nu_r+2N)/2}\label{eq:lemma3c}
\end{align}
where
\begin{align*}
  C^{*} &=
  \begin{bmatrix}
    V^{-1} & V^{-1}\\
    V^{-1} & V^{-1}\\
  \end{bmatrix} &
  C^{**} &=
  \begin{bmatrix}
    \tau_h I & 0\\
    0 & 0
  \end{bmatrix} &
  C^{***} &=
  \begin{bmatrix}
    0 & 0\\
    0 & \tau_c Q
  \end{bmatrix}
\end{align*}

To bound (\ref{eq:lemma3a}), reparameterize in terms of $w_i =
\theta_i+\phi_i$ and $v_i = \theta_i - \phi_i$. Then the quadratic
form in (\ref{eq:lemma3a}) can be written as
\begin{align*}
  (\Theta - \mu_N)^T C^{*} (\Theta - \mu_N) &=
  [(w - \mu_w^{*})^T\quad (v - \mu_v^{*})^T]
  \begin{bmatrix}
    \frac{1}{2} &  \frac{1}{2}\\
    \frac{1}{2} & -\frac{1}{2}
  \end{bmatrix}
  \begin{bmatrix}
    V^{-1} &     V^{-1}\\
    V^{-1} &     V^{-1}
  \end{bmatrix}
  \begin{bmatrix}
    \frac{1}{2} &  \frac{1}{2}\\
    \frac{1}{2} & -\frac{1}{2}
  \end{bmatrix}
  \begin{bmatrix}
    w - \mu_w^{*}\\
    v - \mu_v^{*}
  \end{bmatrix}\\
  &= (w - \mu_w^{*})^T V^{-1} (w - \mu_w^{*})\\
  &= \sum_{i=1}^N Y_i (w_i - \mu_i^{*})^2
\end{align*}
 with
\begin{equation*}
  \mu^{*} =
  \begin{bmatrix}
    \mu_w^{*}\\
    \mu_v^{*}
  \end{bmatrix} = 
  \begin{bmatrix}
    \frac{1}{2}I &     \frac{1}{2}I \\
    \frac{1}{2}I &    -\frac{1}{2}I 
  \end{bmatrix} \mu_N
\end{equation*}
This does not depend on $v$. So the quadratic form in (\ref{eq:lemma3a}) is bounded
by
\begin{equation*}
  (\Theta - \mu_N)^T C^{*} (\Theta - \mu_N)
  \le K_1 + K_2 \sum_{i=1}^NY_i w_i^2
\end{equation*}
for some constants $K_1$ and $K_2$. The term (\ref{eq:lemma3a}) can thus be bounded by
\begin{equation*}
  (1 + K_1)^{(\nu_r+2N)/2} \prod_{i=1}^N \left[
    \exp(w_i Y_i - E_i e^{w_i})\left(1 + \frac{K_2}{\nu_r}Y_i w_i^2\right)^{(\nu_r+2N)/2}
    \right]
\end{equation*}
None of these terms involves $\tau_h$ or $\tau_c$ and since $Y_i > 0$
each term
\begin{equation*}
  \exp(w_i Y_i - E_i e^{w_i})\left(1 + \frac{K_2}{\nu_r}Y_i w_i^2\right)^{(\nu_r+2N)/2}
\end{equation*}
is bounded in $w_i$.  So (\ref{eq:lemma3a}) is bounded. Using Lemma
\ref{lemma:second}, the quadratic forms in (\ref{eq:lemma3b}) and
(\ref{eq:lemma3c}) are bounded by
\begin{equation*}
    (\Theta - \mu_N)^T C^{**} (\Theta - \mu_N)
    = \tau_h(\theta - \mu_{N,\theta})^T (\theta - \mu_{N,\theta})
    \le K_3 + K_4 \tau_h\theta^T\theta
\end{equation*}
and
\begin{equation*}
    (\Theta - \mu_N)^T C^{***} (\Theta - \mu_N)
    = \tau_c(\phi - \mu_{N,\phi})^T Q (\phi - \mu_{N,\phi})
    \le K_5 + K_6 \tau_c\phi^T Q \phi
\end{equation*}
for some constants $K_3, K_4, K_5$, and $K_6$.  Thus term (\ref{eq:lemma3b}) is
bounded by
\begin{equation*}
  (1+K_3)^{(\nu_r+2N)/2}
  \sup_{z \ge 0}\left\{
    \exp\left(-\frac{z}{2}\right)
    \left(1 + \frac{K_4}{\nu_r}z\right)^{(\nu_r+2N)/2}\right\} < \infty
\end{equation*}
and term (\ref{eq:lemma3c}) is bounded analogously. 
\end{proof}
We can now utilize the above lemmas to prove Proposition \ref{prop:envelope}.
\begin{proof}[Proposition \ref{prop:envelope}] The ratio
  $\pi(\tau_h,\tau_c,\Theta)/r(\tau_h,\tau_c,\Theta)$ can be written
  as
\begin{equation*}
  \begin{split}
    & \exp\left(\sum_{i=1}^N((\theta_i+\phi_i)Y_i - E_i e^{\theta_i+\phi_i}
      -\frac{\tau_h}{2}\theta^T\theta
      -\frac{\tau_c}{2}\phi^T Q \phi\right)
    \left(1 + \frac{1}{\nu_r}(\Theta - \mu_N) C (\Theta - \mu_N)\right)^{(\nu_r+2N)/2}\\
    \times &
    \exp{\left(-\frac{\tau_h}{\beta_h} -\frac{\tau_c}{\beta_c}\right)}\left[ 1+\frac{1}{\nu_h}\left(\frac{\log(\tau_h) - \mu_h}{\sigma_h}\right)^2\right]^{(\nu_h+1)/2} \left[ 1+\frac{1}{\nu_c}\left(\frac{\log(\tau_c) - \mu_c}{\sigma_c}\right)^2\right]^{(\nu_c+1)/2} \\
    \times & \tau_h^{N/2+\alpha_h}\tau_c^{M/2+\alpha_c} \det(C)^{-0.5}.
  \end{split}
\end{equation*}
The first term is bounded by Lemma \ref{lemma:third}.  We can apply
Lemma \ref{lemma:first} to bound the determinant in the third term. Hence, it
 follows that the above ratio (ignoring constants) is bounded by
\begin{equation*}
  \begin{split}
    & \exp{\left(-\frac{\tau_h}{\beta_h} -\frac{\tau_c}{\beta_c}\right)}\left[ 1+\frac{1}{\nu_h}\left(\frac{\log(\tau_h) - \mu_h}{\sigma_h}\right)^2\right]^{(\nu_h+1)/2} \left[ 1+\frac{1}{\nu_c}\left(\frac{\log(\tau_c) - \mu_c}{\sigma_c}\right)^2\right]^{(\nu_c+1)/2} \\
    \times &  \tau_h^{\alpha_h} \tau_c^{\alpha_c} \left(\frac{\min \bY + \tau_h}{\min \bY}\right)^{1/2}  \frac{1}{\tau_h^{1/2}},
 \end{split}
\end{equation*}
which is bounded as long as $\alpha_h\geq 1$.
\end{proof}
\end{appendix}

\bibliographystyle{apalike} 
\bibliography{htspatperf}

\begin{thebibliography}{}

\bibitem[Banerjee et~al., 2004]{ban:carl:gelf:2004}
Banerjee, S., Carlin, B., and Gelfand, A. (2004).
\newblock {\em Hierarchical modeling and analysis for spatial data}.
\newblock Chapman \& Hall Ltd.

\bibitem[Besag et~al., 1991]{Besa:York:Moll:1991}
Besag, J., York, J., and Mollie, A. (1991).
\newblock Bayesian image restoration, with two applications in spatial
  statistics ({D}isc: p21-59).
\newblock {\em Annals of the Institute of Statistical Mathematics}, 43:1--20.

\bibitem[Billingsley, 1999]{Bill:1999}
Billingsley, P. (1999).
\newblock {\em Convergence of probability measures}.
\newblock John Wiley and Sons.

\bibitem[Blinded, 2003]{blinded:2003}
Blinded, B. (2003).
\newblock Blinded {P}h.{D}. thesis.
\newblock Technical report, Blinded University, Blinded.

\bibitem[Caffo et~al., 2002]{Caff:Boot:Davi:2002}
Caffo, B.~S., Booth, J.~G., and Davison, A.~C. (2002).
\newblock Empirical supremum rejection sampling.
\newblock {\em Biometrika}, 89(4):745--754.

\bibitem[Cowles and Carlin, 1996]{Cowl:Carl:1996}
Cowles, M.~K. and Carlin, B.~P. (1996).
\newblock Markov chain {M}onte {C}arlo convergence diagnostics: {A} comparative
  review.
\newblock {\em Journal of the American Statistical Association}, 91:883--904.

\bibitem[Diggle et~al., 1998]{Digg:Tawn:Moye:1998}
Diggle, P.~J., Tawn, J.~A., and Moyeed, R.~A. (1998).
\newblock Model-based geostatistics ({D}isc: p326-350).
\newblock {\em Applied Statistics}, 47:299--326.

\bibitem[Everson, 2001]{Ever:2001}
Everson, P.~J. (2001).
\newblock Exact {B}ayesian inference for normal hierarchical models.
\newblock {\em Journal of Statistical Computation and Simulation},
  68(3):223--241.

\bibitem[Everson and Morris, 2000]{Ever:Morr:2000}
Everson, P.~J. and Morris, C.~N. (2000).
\newblock Inference for multivariate normal hierarchal models.
\newblock {\em Journal of the Royal Statistical Society, Series B,
  Methodological}, 62(2):399--412.

\bibitem[Flegal et~al., 2008]{fleg:hara:jone:2008}
Flegal, J., Haran, M., and Jones, G. (2008).
\newblock {M}arkov chain {M}onte {C}arlo: Can we trust the third significant
  figure?
\newblock {\em Statistical Science}, 23:250--260.

\bibitem[Gamerman et~al., 2003]{Game:More:Rue:2003}
Gamerman, D., Moreira, A.~R., and Rue, H. (2003).
\newblock Monte {C}arlo {EM} with importance reweighting and its applications
  in random effects models.
\newblock {\em Computational Statistics and Data Analysis}, 42:513--533.

\bibitem[Haran et~al., 2003]{Hara:Hodg:Carl:2003}
Haran, M., Hodges, J.~S., and Carlin, B.~P. (2003).
\newblock Accelerating computation in {M}arkov random field models for spatial
  data via structured {MCMC}.
\newblock {\em Journal of Computational and Graphical Statistics}, 12:249--264.

\bibitem[{Health Resources and Services Administration}, 2003]{healthres:2003}
{Health Resources and Services Administration} (2003).
\newblock Health professions, area resource file (arf) system.
\newblock Technical report, Quality Resource Systems, Inc., Fairfax, VA.

\bibitem[Hobert and Geyer, 1998]{hobe:geye:1998}
Hobert, J.~P. and Geyer, C.~J. (1998).
\newblock Geometric ergodicity of {G}ibbs and block {G}ibbs samplers for a
  hierarchical random effects model.
\newblock {\em Journal of Multivariate Analysis}, 67:414--430.

\bibitem[Ihaka and Gentleman, 1996]{Ihak:Gent:1996}
Ihaka, R. and Gentleman, R. (1996).
\newblock R: {A} language for data analysis and graphics.
\newblock {\em Journal of Computational and Graphical Statistics}, 5:299--314.

\bibitem[Jones, 2004]{jone:2004}
Jones, G.~L. (2004).
\newblock On the {M}arkov chain central limit theorem.
\newblock {\em Probability Surveys}, 1:299--320.

\bibitem[Jones et~al., 2006]{Jone:Hara:Caff:Neat:fixe:2006}
Jones, G.~L., Haran, M., Caffo, B.~S., and Neath, R. (2006).
\newblock Fixed-width output analysis for {M}arkov {C}hain {M}onte {C}arlo.
\newblock {\em Journal of the American Statistical Association},
  101(476):1537--1547.

\bibitem[Knorr-Held and Rue, 2002]{Knor:Rue:2002}
Knorr-Held, L. and Rue, H. (2002).
\newblock On block updating in {M}arkov random field models for disease
  mapping.
\newblock {\em Scandinavian Journal of Statistics}, 29(4):597--614.

\bibitem[Liu, 1996]{Liu:metr:1996}
Liu, J.~S. (1996).
\newblock Metropolized independent sampling with comparisons to rejection
  sampling and importance sampling.
\newblock {\em Statistics and Computing}, 6:113--119.

\bibitem[McCullagh and Nelder, 1999]{McCu:Neld:gene:1999}
McCullagh, P. and Nelder, J.~A. (1999).
\newblock {\em Generalized Linear Models}.
\newblock Chapman \& Hall Ltd.

\bibitem[Mengersen and Tweedie, 1996]{Meng:Twee:1996}
Mengersen, K.~L. and Tweedie, R.~L. (1996).
\newblock Rates of convergence of the {H}astings and {M}etropolis algorithms.
\newblock {\em The Annals of Statistics}, 24:101--121.

\bibitem[Meyn and Tweedie, 1993]{Meyn:Twee:1993}
Meyn, S.~P. and Tweedie, R.~L. (1993).
\newblock {\em Markov chains and stochastic stability}.
\newblock Springer-Verlag Inc.

\bibitem[M{\o}ller and Nicholls, 2009]{Moll:Nich:1999}
M{\o}ller, J. and Nicholls, G. (2009).
\newblock Perfect simulation for sample-based inference.
\newblock Technical report, University of Aarhus, Department of Mathematical
  Sciences.

\bibitem[Molli\'{e}, 1996]{mollie:1996}
Molli\'{e}, A. (1996).
\newblock Bayesian mapping of disease.
\newblock In {\em Markov chain {M}onte {C}arlo in practice Eds. Gilks, W.R.,
  Richardson, S. and Spiegelhalter, D.J.}, pages 359--379. London: Chapman and
  Hall.

\bibitem[Propp and Wilson, 1996]{Prop:Wils:1996}
Propp, J.~G. and Wilson, D.~B. (1996).
\newblock Exact sampling with coupled {M}arkov chains and applications to
  statistical mechanics.
\newblock {\em Random Structures and Algorithms}, 9(1-2):223--252.

\bibitem[Robert and Casella, 2005]{robe:case:2005}
Robert, C.~P. and Casella, G. (2005).
\newblock {\em Monte Carlo Statistical Methods, Second Edition}.
\newblock Springer, New York.

\bibitem[Roberts and Rosenthal, 2004]{robe:rose:2004}
Roberts, G.~O. and Rosenthal, J.~S. (2004).
\newblock General state space {M}arkov chains and {MCMC} algorithms.
\newblock {\em Probability Surveys}, 1:20--71.

\bibitem[Rue, 2001]{Rue:2001}
Rue, H. (2001).
\newblock Fast sampling of {G}aussian {M}arkov random fields.
\newblock {\em Journal of the Royal Statistical Society, Series B,
  Methodological}, 63(2):325--338.

\bibitem[Rue and Held, 2005]{Rue:Held:2005}
Rue, H. and Held, L. (2005).
\newblock {\em Gaussian {M}arkov Random Fields: Theory and Applications}.
\newblock Chapman and {H}all.

\bibitem[Rue et~al., 2009]{Rue:Mart:Chop:2009}
Rue, H., Martino, S., and Chopin, N. (2009).
\newblock Approximate {B}ayesian inference for latent {G}aussian models by
  using integrated nested {L}aplace approximations.
\newblock {\em Journal of the Royal Statistical Society, Series B,
  Methodological}, 71(2):1--35.

\bibitem[Tierney, 1994]{Tier:mark:1994}
Tierney, L. (1994).
\newblock Markov chains for exploring posterior distributions ({D}isc:
  P1728-1762).
\newblock {\em The Annals of Statistics}, 22:1701--1728.

\bibitem[Tierney and Kadane, 1986]{Tier:Kada:1986}
Tierney, L. and Kadane, J.~B. (1986).
\newblock Accurate approximations for posterior moments and marginal densities.
\newblock {\em Journal of the American Statistical Association}, 81:82--86.

\bibitem[{v}on Neumann, 1951]{vonneumann1951vtu}
{v}on Neumann, J. (1951).
\newblock {Various techniques used in connection with random digits}.
\newblock {\em Applied Math Series}, 12:36--38.

\bibitem[Wolfinger and Kass, 2000]{Wolf:Kass:2000}
Wolfinger, R.~D. and Kass, R.~E. (2000).
\newblock Nonconjugate {B}ayesian analysis of variance component models.
\newblock {\em Biometrics}, 56(3):768--774.

\end{thebibliography}

\end{document}